\documentclass[aps,superscriptaddress,11pt]{revtex4-2}
\usepackage{amsmath, amssymb, amsthm, graphicx, tikz, booktabs, bm}
\usepackage{caption}
\usepackage{pgfplots}
\usepackage{tikz-3dplot}
\usepackage{tikzscale}
\usepackage{times}
\usepackage{hyperref}
\usetikzlibrary{positioning, arrows.meta, decorations.pathmorphing, calc}
\pgfplotsset{compat=1.18}

\renewcommand{\thesection}{\Roman{section}}

\makeatletter
\def\@hangfrom@section#1#2#3{\@hangfrom{{\large #1} #2}{\large #3}}
\def\@hangfroms@section#1#2{#1#2}
\makeatother

\def\b{\beta}

\newcommand{\EMD}{\mathrm{EMD}}

\newcommand{\NMR}{\mathrm{NMR}}

\newcommand{\Rn}{R^{(n)}_t}
\newcommand{\Rinf}{R^{(\infty)}_t}
\newcommand{\Tn}{T^{(n)}_t}
\newcommand{\Tinf}{T^{(\infty)}_t}

\newtheorem{theorem}{Theorem}[section]
\newtheorem{proposition}[theorem]{Proposition}
\newtheorem{lemma}[theorem]{Lemma}
\newtheorem{corollary}[theorem]{Corollary}
\theoremstyle{definition}

\newtheorem{remark}[theorem]{Remark}
\theoremstyle{plain}

\begin{document}

\title{Weight geometry governs functional memory in complex systems}

\author{Elka\"ioum M.~Moutuou}
\email{elkaioum.moutuou@concordia.ca}
\affiliation{Department of Electrical and Computer Engineering, Concordia University, Montreal, QC, H3G 1M8}

\author{Habib Benali}
\affiliation{Department of Electrical and Computer Engineering, Concordia University, Montreal, QC, H3G 1M8}

\begin{abstract}
Complex systems, from gene regulatory networks to neural circuits and
ecological food webs, exhibit rich functional behaviour that topology
alone does not capture. Yet functional complexity remains difficult to
quantify independently of structural organisation. Here we introduce a
thermodynamic framework in which functional complexity is characterised through
the hierarchical organisation of functional memory, quantifying how the
influence of past interactions is distributed and progressively compressed
across scales. Across thirty-four empirical networks spanning biological,
ecological, social, technological, and biophysical systems and several
orders of magnitude in size and density, real interaction strengths
organise functional memory at greater hierarchical depth than random
weight assignment on the same topology in every domain studied. The
framework further reveals that functional memory occupies a remarkably
low-dimensional space, collapsing onto four recurrent dynamical
organisations. Comparisons with null models that selectively perturb
weighted transport geometry, mesoscale wiring, and directionality show
that these structural ingredients play distinct roles: weighted transport
geometry systematically governs memory depth, whereas mesoscale wiring  
organises memory across scales and directionality modulates the response of 
the cascade to structural perturbation. The same comparisons provide an operational 
criterion for determining whether network weights encode functionally meaningful
structure beyond topology. These results establish weighted transport
geometry as a primary organiser of functional memory and provide a quantitative framework for 
studying functional complexity in directed weighted networks.
\end{abstract}

\maketitle

\section{Introduction}
What principle governs how living, social, and engineered systems store the
memory of their past interactions? Network science has traditionally sought
the answer in structural properties of the wiring diagram, including degree
heterogeneity, modularity, and spectral
organisation~\cite{Simon1991,Newman2003,Bianconi2023}. This topological
perspective has been productive, yet it implicitly assumes that the functional
complexity of a system is largely determined by which connections exist. Even minimal 
systems illustrate the distinction between structure and the
dynamics of historical integration. Networks with similar static structural
properties can nevertheless compress causal history at markedly different
rates, revealing differences invisible to conventional structural
observables (Supplementary Section 1). Here we show that this assumption fails in one precise 
and empirically falsifiable respect: the geometry of interaction strengths consistently
shapes how deeply past interactions remain integrated into present dynamics.

In directed networks, information propagates locally along directed paths
and integrates contributions from progressively earlier interactions. The depth over 
which past interactions continue to influence a node's present state is finite and 
constrained by both structure and physical cost.
Structural observables characterise which paths exist, but they do not resolve
how path contributions are weighted across temporal depth, how historical
influence is suppressed as the system evolves, or how this suppression is
distributed across scales. The organisation of functional memory is 
encoded precisely in these quantities, and they lie beyond the reach of topological
analysis alone.

Across thirty-four empirical networks spanning gene regulation, protein
interaction, neural connectomes, ecological webs, social and information
systems, and transportation infrastructure, randomising edge weights while
preserving binary topology reduces functional memory depth. The effect
holds across all domains and over three orders of magnitude in network
size. The mechanism follows directly from the structure of directed walks:
functional memory is sustained by long weighted paths, and real systems
correlate strong interactions along coherent directed pathways, producing
multiplicative amplification of deep contributions that random weight
configuration suppresses. Topology determines which paths exist, whereas
the geometry of directed transport determines how strongly distant causal
histories remain integrated.

\begin{figure*}[th!]
 \centering
 \includegraphics[width=.8\textwidth]{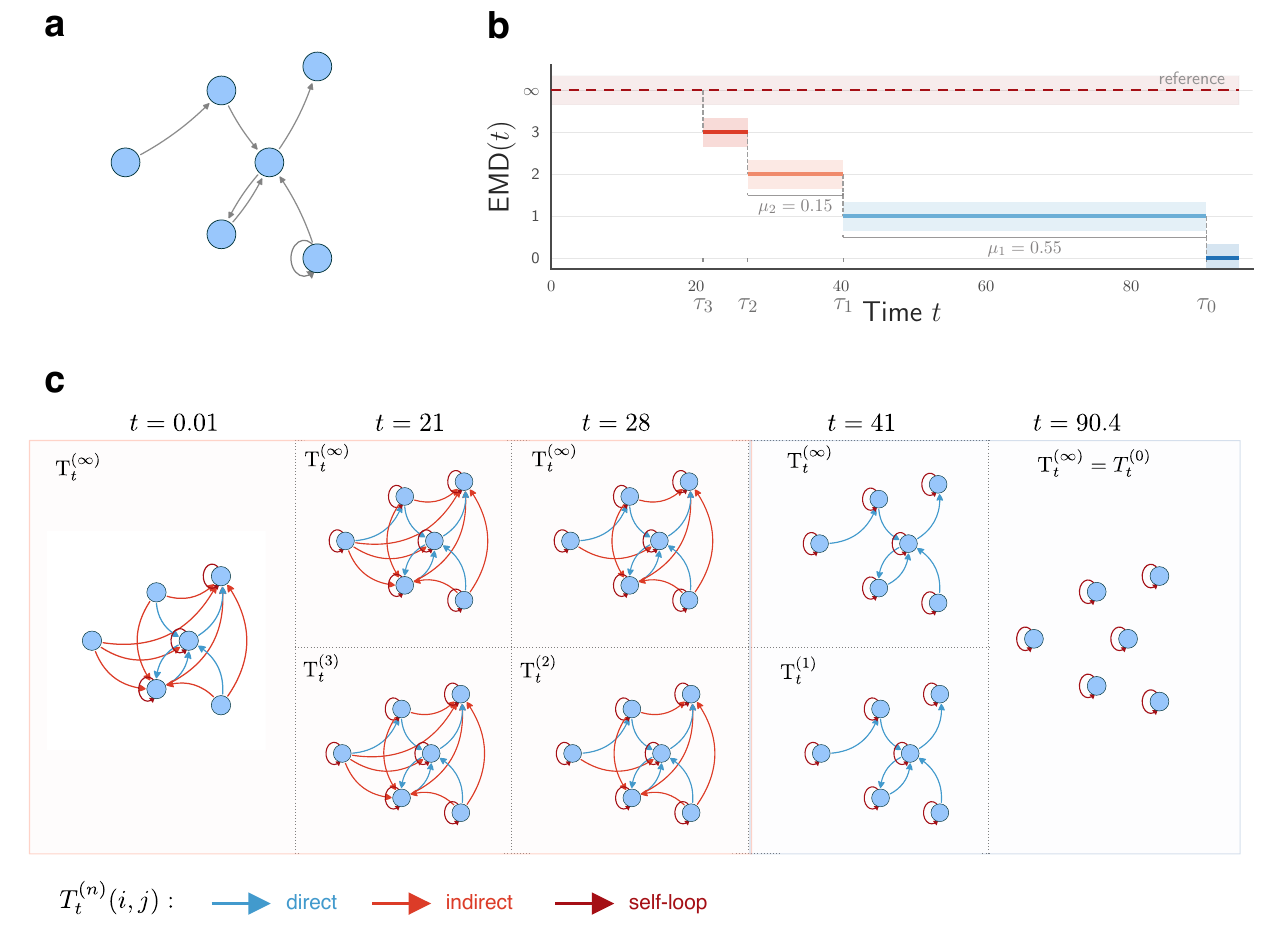}
 \caption{
\textbf{Thermodynamic cascade of hierarchical functional memory.}
\textbf{a}, Example directed network used to illustrate the formalism.
\textbf{b}, Thermalization cascade generated by progressively suppressing
long-memory contributions as the inverse temperature increases. Starting
from the infinite-history equilibrium state ($\EMD=\infty$, dark red), the
effective memory depth $\EMD(t)$ decreases in discrete steps as successive
thermalization times $\tau_n$ are crossed (vertical dashed lines). Each
horizontal band corresponds to the interval $[\tau_n,\tau_{n-1})$ during
which the depth-$n$ propagator $T_t^{(n)}$ provides the effective
description of probability flow. Colour progressively shifts from
long-memory (red) to short-memory (blue) dynamics. The hierarchical memory
spectrum $\mu_l=(\tau_{l-1}-\tau_l)/\tau_0$ records how total
thermalization time is distributed across depths, while the non-Markovian
ratio $\NMR=\tau_1/\tau_0$ measures the persistence of long-memory
relative to the freezing time $\tau_0$.
\textbf{c}, Memory flow networks generated by the full propagator
$T_t^{(\infty)}$ (top row) and by the effective truncated propagator at the
dominant active depth (bottom row) at successive macroscopic times along
the cascade. Blue edges represent direct Markovian flow, whereas red edges
represent indirect multi-hop contributions from long-memory. As
thermalization proceeds, deep indirect flows are progressively compressed
into shallower effective transitions, producing a systematic coarsening of
the flow structure. At late times the dynamics collapse onto persistent local
self-loops corresponding to frozen local-memory regime.
}
 \label{fig:cascade}
\end{figure*}

To quantify these effects, we introduce a thermodynamic framework in which a
macroscopic inverse temperature controls the depth over which past interactions
contribute to present dynamics, driving a cascade of memory-depth transitions
whose infinite-memory limit is identified with an equilibrium flow associated with 
a Kubo--Martin--Schwinger (KMS) state~\cite{Moutuou2025a,Moutuou2025b}, the equilibrium
reference state of the theory. Three observables derived from this cascade define a
signature of functional memory organisation: the effective memory depth, which tracks the
instantaneous causal horizon; the hierarchical memory spectrum, which encodes
how memory loss is distributed across depth scales; and the memory compression
rate, which quantifies how rapidly deep causal history compresses into
shallow effective dynamics. Together, these observables
resolve distinctions invisible to graph-theoretic analysis alone.

Comparing each network against null models that selectively perturb
weighted transport geometry, mesoscale structure, and edge directionality
reveals that these ingredients contribute distinct and non-equivalent
roles: weight geometry systematically governs memory depth, mesoscale
wiring shapes memory organisation across scales, and directionality
modulates dynamical response without universally determining memory depth.
Despite their structural diversity, the networks collapse onto a small
number of recurrent organisations of functional memory that remain
invisible to topological analysis alone.

\section{Results}
\subsection{Functional memory organisation as hierarchical resistance to compression}

We characterise each network through its memory architecture: the
hierarchical organisation of how causal history is retained, distributed,
and progressively compressed as the network evolves. Structural observables 
characterise connectivity and path organisation, but
they do not resolve how contributions from different depths are weighted,
suppressed, or redistributed across scales. Functional memory organisation is 
encoded precisely in these quantities.

To make memory architecture measurable, we model directed networks as
multigraphs whose asymmetric edges encode causal interactions.  Edge
weights $w_{j\to i} = A_{ij} > 0$ are either continuous positive reals,
representing interaction strengths, or positive integers, representing
multiplicities of parallel directed interactions; both cases are handled
uniformly by the framework below.  We introduce a family of
\emph{$n$-memory propagators} $\Tn$: transition matrices of a
non-Markovian random walk that, at macroscopic time $t$, samples
directed walks of length at most $n$, weighted by the energetic cost
$e^{-\beta(t)|\gamma|}$ of sustaining causal correlations across depth
$|\gamma|$. These propagators admit the compact representation
\begin{equation}
  \Rn = \sum_{k=0}^{n} e^{-\beta(t)k} A^k,
  \label{eq:resolvent_finite}
\end{equation}
with $\Tn$ obtained by column normalization, where $A$ is the
weighted adjacency matrix with entry $A_{ij} = w_{j\to i}$ for each
directed edge $j\to i$ (see Methods).

As the memory depth increases, the propagators converge toward an
infinite-history equilibrium propagator $\Tinf$ defined by
\[
  \Rinf := \lim_{n\to\infty}\Rn,
\]
representing the fully integrated dynamics of the network for $\beta(t)$
exceeding the critical inverse temperature $\beta_c = \log\rho(A)$
(Methods). This equilibrium flow is associated with KMS
states~\cite{Moutuou2025a,Moutuou2025b} and provides the thermodynamic
reference against which all finite-memory dynamics are measured.

As $\beta(t)$ increases, long-memory contributions are progressively
suppressed and the dynamics undergo a \emph{thermalization cascade}: a
sequence of memory-depth transitions in which paths exceeding successive
lengths cease to contribute appreciably to the effective flow, 
analogous to aging phenomena in glassy
systems~\cite{Mezard1986,Vincent2007}. We define the $n$-thermalization time
\begin{equation}
  \tau_n = \inf\!\left\{
    t :
    \frac{\|\Rinf - \Rn\|_F}{\|\Rinf\|_F}
    \leq \varepsilon
  \right\}
  \label{eq:tau}
\end{equation}
as the first time at which depth-$n$ memory becomes indistinguishable from
equilibrium. The ordered sequence $\tau_0 > \tau_1 > \tau_2 > \cdots$
defines the cascade, with $\tau_0$ marking the freezing time at which even
depth-1 memory has collapsed (Fig.~\ref{fig:cascade}).

Three observables derived from the cascade jointly constitute a network's
functional memory signature that does not depend on the specific choice of 
the $\b$--protocol (see Supplementary S4). The effective memory depth
\[
  \EMD(t) = \min\!\left\{
    n :
    \frac{\|\Rinf-\Rn\|_F}{\|\Rinf\|_F} \leq \varepsilon
  \right\}
\]
tracks the instantaneous effective integration depth. The hierarchical memory spectrum
\begin{equation}
  \mu(l) = \frac{\tau_{l-1}-\tau_l}{\tau_0}
  \label{eq:hms}
\end{equation}
records how thermalization time is distributed across memory depths; spectra
concentrated at large $l$ signal deep memory organisation, whereas broad
spectra reflect diffuse multiscale integration. The memory compression rate
$\gamma_n = (\tau_n - \tau_{n+1})^{-1}$ and the non-Markovian ratio
$\NMR = \tau_1/\tau_0$ condense the cascade into scalar summaries of
compression dynamics and overall memory persistence respectively.

Together, these observables define the organisation of functional memory
across scales; functional complexity, in this framework, corresponds to the
depth, distribution, and persistence of this organisation under progressive
memory compression. In this view, deep memory corresponds to resistance against the progressive
compression of long-range causal history into shallow effective dynamics.

Three structural null models serve as counterfactuals that selectively 
suppress distinct memory-generating mechanisms. The \emph{Directed
Configuration Model} (DCM) randomizes wiring while preserving the degree sequence, 
isolating mesoscale organisation. The \emph{Geometry
Fluctuation Model} (GFM) randomizes edge weights at fixed binary topology,
isolating weight geometry (see S5 in SI). The \emph{Polarity Noise Model} (PNM)
continuously interpolates edge directions toward maximum entropy, isolating
directionality (see Methods). Comparing each real network against these null ensembles
decomposes its functional memory structure into three mechanistically distinct
contributions.

\begin{figure*}[ht!]
  \centering
  \includegraphics[width=\textwidth]{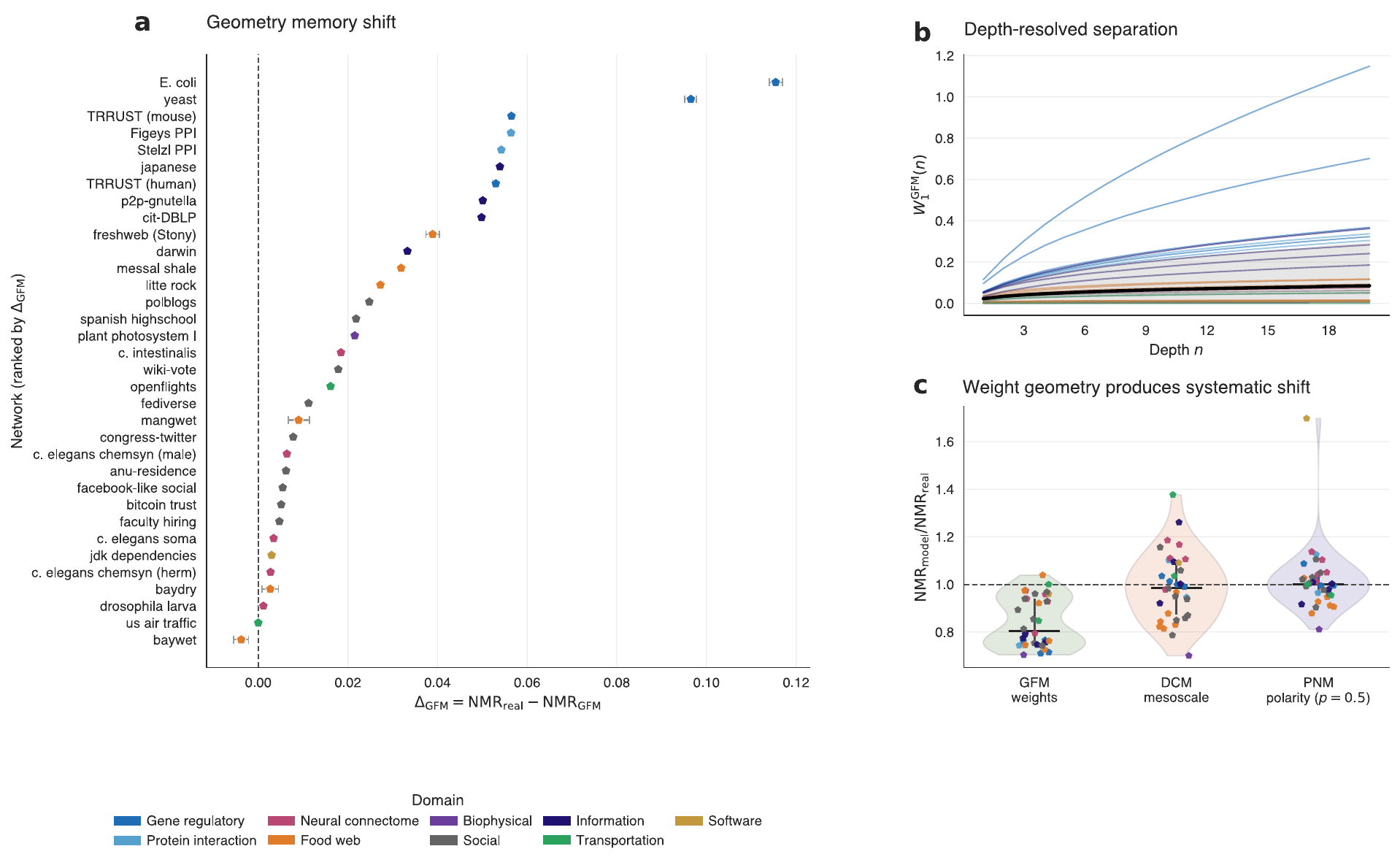}
  \caption{\textbf{A universal empirical regularity of functional memory
organisation.}
\textbf{a}, Ranked geometry-memory shift across all thirty-four empirical
networks. For each network we plot
$\Delta_{\mathrm{GFM}} = \mathrm{NMR}_{\mathrm{real}} -
\mathrm{NMR}_{\mathrm{GFM}}$, where $\mathrm{NMR}_{\mathrm{GFM}}$ is the
ensemble mean over Geometry Fluctuation Model realisations preserving the
binary topology while randomising edge weights. The dashed line marks zero.
Thirty-two networks lie strictly above zero; US air traffic lies exactly at
zero, as weight randomisation is dynamically inert for this hub-dominated
network; and one wetland food web (baywet) falls marginally below zero
($\Delta_{\mathrm{GFM}} = -0.004$, signal-to-noise ratio $0.44$), a
near-inert case with model-estimated carbon flows discussed in the main
text. Across every biological, ecological, social, and technological domain
studied, real weight geometry sustains deeper functional memory organisation
than randomised weights on the same topology. \textbf{b}, Depth-resolved
confirmation. The Wasserstein distance $W_1^{\mathrm{GFM}}(n)$ between
real and GFM hierarchical memory spectra is positive across memory depths
for all thirty-four networks, including those for which the NMR shift is
near-zero, indicating that real weight geometry consistently redistributes
memory mass toward deeper scales in the hierarchical memory spectrum.
\textbf{c}, Model specificity. Ratios $\mathrm{NMR}_{\mathrm{model}} /
\mathrm{NMR}_{\mathrm{real}}$ for GFM, DCM, and PNM show that the effect
is predominantly one-sided for weight geometry --- 32 of 34 networks have
GFM ratios below one --- whereas mesoscale and polarity perturbations
produce both increases and decreases depending on the network. Colours
denote domains.}
  \label{fig:universal}
\end{figure*}

\subsection{Real systems resist memory compression}

Comparing each network against its GFM null ensemble reveals the main empirical regularity 
identified in this work (Fig.~\ref{fig:universal}).

\medskip
\noindent
\textbf{Weight geometry consistently deepens memory.}
In thirty-two of the thirty-four networks analysed, real interaction
strengths organise functional memory at greater hierarchical depth than
randomised weight assignment on the same binary topology, with
$\NMR_{\mathrm{real}} > \NMR_{\mathrm{GFM}}$ and confidence intervals
that entirely exclude zero. US air traffic is the one system for which
weight randomisation is dynamically inert ($\NMR_{\mathrm{real}} =
\NMR_{\mathrm{GFM}}$ exactly): its thermalisation cascade is determined entirely
by hub topology, and interaction frequencies play no role in organising
memory depth. The Chesapeake Bay wet-season food web is the single near-inert case, with 
$\Delta_{\mathrm{GFM}} = -0.004$ and a signal-to-noise ratio of $0.44$; its 
carbon flows are model-estimated seasonal averages~\cite{Baird1989} rather 
than directly measured interaction strengths, and the effect sits at the 
boundary of measurement resolution for estimated continuous weights. Replacing those
model-estimated flows with unit weights recovers a strongly positive GFM
deviation ($\Delta_{\mathrm{GFM}} = +0.022$, SNR $= 11.9$;
Supplementary Section 10), confirming that the
Chesapeake Bay food web topology encodes the expected geometric memory
signal and that the near-inert result in the weighted network reflects
the resolution of the carbon flow estimates rather than a property of
the ecosystem itself. At the
depth-resolved level, the Wasserstein distance~\cite{Villani2009} between
the real hierarchical memory spectrum and the GFM ensemble is positive for
all thirty-four networks and generally increases monotonically with
reference depth, including in the two near-inert cases where the NMR
shift is negligible.

This effect holds across gene regulatory networks, protein interactomes,
neural connectomes, food webs, social platforms, citation graphs, word
adjacency networks, software dependency networks, biophysical energy
transfer, and transportation infrastructure, despite major differences in
topology, density, and scale. The comparison against the GFM ensemble
acts as a controlled intervention in which binary topology, degree
sequence, and directionality are preserved while the weighted geometry of
directed transport is randomised. The reduction of memory depth observed
in thirty-two of the thirty-four systems under this intervention shows
that real weight geometry acts as a primary structural organiser of
functional memory.

The mechanism follows directly from the structure of the resolvent
$\Rn$ (Eq.~\ref{eq:resolvent_finite}), which aggregates contributions
from directed walks across all depths up to $n$, each weighted by both
its depth and the product of its edge weights. In real systems, strong
interactions are correlated along coherent directed pathways, generating
multiplicative amplification of deep-walk contributions that survives
thermal suppression. Random weight assignment destroys these correlations,
reducing coherent amplification of deep contributions and shifting the
dynamics toward shallower effective memory. The GFM null suppresses
precisely this correlation structure while preserving all topological
constraints, showing that real systems resist the compression of deep
causal history into shallow effective dynamics. Structural analyses that
neglect edge weights therefore cannot fully recover the hierarchy of
memory organisation encoded in real interaction patterns.

This signal is absent in Erdős-Rényi and Watts-Strogatz directed graphs
with independently distributed weights, where $\Delta_{\mathrm{GFM}} < 0$
in all sixty-eight instances across both families
(Supplementary Information~\ref{sec:random_controls}), confirming that a strong positive
$\Delta_{\mathrm{GFM}}$ requires coherent alignment of interaction
strengths along directed paths rather than reflecting a topological or
computational artefact.

Beyond quantifying the effect, the GFM comparison therefore provides an operational
criterion for whether a network's edge weights encode genuine functional
interaction structure. Every system in which weight randomisation leaves
memory depth unchanged or reduces it --- including the two near-inert
cases --- exhibits a specific structural explanation: either the cascade
is determined entirely by topology (US air traffic), or the weights are
derived from a model-estimation procedure whose resolution is insufficient
to resolve the functional signal (the wet-season food web). In contrast,
every system with well-defined, directly observed interaction strengths
satisfies $\NMR_{\mathrm{real}} > \NMR_{\mathrm{GFM}}$. The GFM test
thereby identifies cases where weight construction may not faithfully
represent the functionally relevant interaction geometry, independently
of any domain-specific knowledge.

\medskip
\noindent
\textbf{Directionality modulates memory dynamics.}
The Polarity Noise Model reveals a complementary and qualitatively distinct
phenomenon. Progressive randomisation of edge directions alters the temporal
response of the cascade but does not induce a universal shift in memory
depth. The polarity slope $s = \mathrm{d}\NMR/\mathrm{d}p$ evaluated over
$p \in [0, 0.5]$ changes sign across networks: twenty systems exhibit
positive slopes, indicating that directionality suppresses deep memory
organisation, whereas fourteen exhibit negative slopes, indicating that it
sustains deep memory organisation. Unlike the geometric effect, there is no
universal directionality law.

At the spectral level, the Wasserstein response to directional perturbation
satisfies
\[
W_1^{\mathrm{PNM}} < W_1^{\mathrm{GFM}}
\]
in 29 of 34 networks, confirming that directionality usually alters the
cascade less strongly than weighted transport geometry alters the
distribution of memory across depth scales. The five exceptions are the
three neural connectomes with the strongest polarity sensitivity (the two
\textit{C.~elegans} chemical synapse networks and the \textit{Drosophila}
larval connectome) and the two directionality-dominated networks (JDK
software dependencies and Bitcoin trust), all cases where edge orientation
organises the cascade more strongly than interaction-strength geometry.
Weight geometry therefore governs how deeply functional memory is organised,
whereas directionality governs how that organisation responds to structural
perturbation. This separation between functional memory organisation and
dynamical response motivates treating structural driver and dynamical regime
as independent descriptive axes. Full $\NMR(p)$ trajectories for all
thirty-four networks are shown in Supplementary Fig. S3.

\subsection{Low-dimensional dynamical organisation of functional memory}

\begin{figure*}[ht!]
  \centering
  \includegraphics[width=\textwidth]{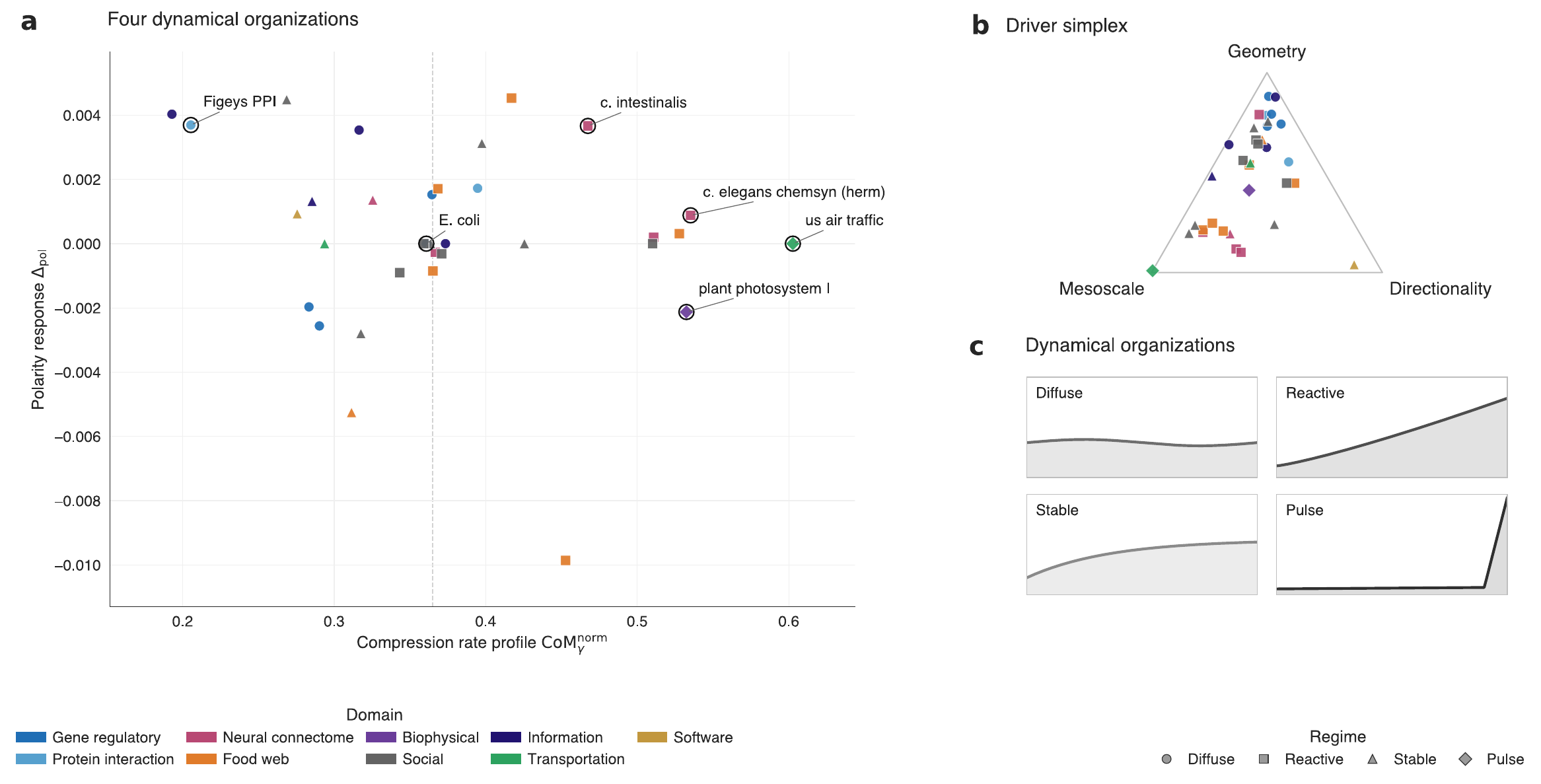}
 \caption{
\textbf{Low-dimensional organisations of functional memory across complex
systems.} Thirty-four empirical networks spanning six domains converge onto
four functional memory organisations despite major differences in topology,
scale, and biological or physical function.
\textbf{a}, Functional memory phase space. Each point represents one
network, positioned by the normalised centre of mass of its compression
rate profile, $\mathrm{CoM}_{\gamma}^{\mathrm{norm}} =
  \bigl(\sum_n n\,\gamma_n\bigr)\big/\bigl(n_{\max}\sum_n\gamma_n\bigr)$,
and by its polarity response $\Delta_{\mathrm{pol}} = \NMR(0.5) - \NMR(0)$.
The horizontal coordinate indicates the depth at which memory compression
is concentrated; the vertical coordinate measures whether directional
perturbation deepens or suppresses long-range memory organisation. Marker
shape indicates the assigned dynamical organisation; colour indicates the
domain.
\textbf{b}, Structural driver simplex. Barycentric coordinates obtained
from the normalised Wasserstein responses to the three null ensembles
quantify the relative contributions of mesoscale organisation (DCM),
weight geometry (GFM), and directionality (PNM) to the functional memory
hierarchy. Most networks cluster toward the geometry-driven sector,
consistent with the universal role of weight geometry established in
Fig.~\ref{fig:universal}.
\textbf{c}, Canonical hierarchical memory organisations. Schematic
hierarchical memory spectra and compression profiles for the four dynamical
organisations identified by the probabilistic classification described in
Methods. Diffuse systems exhibit broad, high-entropy memory distributions
spread across depth scales. Reactive systems sustain deep memory with
concentrated compression dynamics and moderate sensitivity to directional
perturbation. Stable systems sustain deep memory with more diffuse
compression profiles and weak directional sensitivity. Pulse systems
exhibit compression dominated by a single effective mode, through which
most causal history collapses at one characteristic scale. Together, these
organisations define a low-dimensional landscape of functional memory
organisation across all domains studied.
}
\label{fig:regimes}
\end{figure*}

Beyond the universal empirical regularity, the three null models jointly reveal that
functional complexity decomposes along two independent descriptive axes: the structural
ingredient responsible for organising memory, and the dynamical form taken
by that organisation (Fig.~\ref{fig:regimes}). The decomposition is particularly striking because the
networks span multiple domains, sizes, densities, and structural classes
with distinct topological constraints, making the low-dimensional organisation
itself a central result. 

For each network, normalising the three Wasserstein distances to unit sum
defines barycentric coordinates $(\phi^{\mathrm{DCM}},
 \phi^{\mathrm{GFM}},
 \phi^{\mathrm{PNM}})$ on a ternary diagram
(Fig.~\ref{fig:regimes}b) whose vertices represent memory hierarchies
organised predominantly by mesoscale wiring, weight geometry, or
directionality (see Methods).

\medskip
\noindent
\textbf{Structural driver axis.}
Weight geometry is the dominant Wasserstein coordinate in twenty-one of the
thirty-four networks, spanning biological, ecological, social, informational,
biophysical, and technological domains. Ten networks are mesoscale-dominated,
indicating that their memory hierarchy is shaped primarily by higher-order
wiring organisation beyond the degree sequence. One network, the Darwin
correspondence graph, has near-balanced DCM and GFM contributions and is
classified as jointly geometry- and mesoscale-dominated. Two systems, the
JDK software dependency graph and the Bitcoin trust network, are
directionality-dominated, indicating that polarity perturbation produces the
largest reorganisation of their memory hierarchy. The driver axis identifies
which structural ingredient must be preserved or perturbed to alter a
network's functional memory organisation, and its assignment is not
recoverable from topology alone.

\medskip
\noindent
\textbf{Dynamical organisation axis.}
Independent from the structural driver axis, the shape of the hierarchical
memory spectrum together with the compression dynamics separates networks
into four recurrent dynamical organisations
(Fig.~\ref{fig:regimes}).

One class of systems exhibits diffuse multiscale organisation, with memory
distributed broadly across depth scales and no sharply preferred integration
horizon. This group includes strongly geometry-driven systems such as
\textit{E.~coli}, both human protein interactomes, both TRRUST regulatory
networks, and the yeast transcriptome~\cite{ShenOrr2002,Ewing2007,Stelzl2005,Han2015}.
These networks sustain high absolute non-Markovian ratios primarily through
weight geometry, while directionality contributes variably and without a
common signature, suggesting that these systems sustain causal integration
across many depth scales simultaneously.

A second organisation concentrates memory at large hierarchical depth with
concentrated compression dynamics and moderate directional sensitivity.
This class contains five food webs together with four neural connectomes:
the somatic \textit{C.~elegans} connectome~\cite{Varshney2011,Cook2019,Moutuou2025a},
the hermaphrodite chemical synapse network~\cite{Cook2019},
the \textit{C.~intestinalis} central nervous
system~\cite{Ryan2016}, and the \textit{Drosophila} larval
connectome~\cite{Winding2023}. Despite major biological and structural
differences, trophic networks and neural circuits repeatedly converge onto
a common dynamical organisation characterised by sustained deep memory and
concentrated compression, with directional structure modulating the
persistence of long-range causal integration.

A third organisation sustains deep memory with comparatively diffuse
compression profiles and weak polarity reactivity. This class includes
systems whose long-range integration is stabilised either by mesoscale
architecture or by weighted transport geometry. Examples include global
aviation, faculty hiring, the word adjacency network of text from Darwin's
\textit{The Origin of Species}~\cite{Milo2004}, the male
\textit{C.~elegans} chemical synapse network~\cite{Cook2019}, and
the Messel Shale food web~\cite{Dunne2014}. Although these systems differ
substantially in domain and structural driver, they share a common
dynamical signature: deep memory persists while remaining relatively
insensitive to directional perturbation.

The final organisation exhibits extreme memory compression concentrated at
a single characteristic depth. Its compression profile is dominated by a
single pronounced peak, indicating that most causal history is compressed
through one effective memory scale. US air traffic and plant
Photosystem~I realise this organisation through physically distinct
mechanisms: one associated with extreme concentration of transport flows
around major hubs, the other with highly constrained energy-transfer
pathways. Despite belonging to unrelated physical domains, both systems
display memory organisation dominated by a single compression scale,
making all other contributions comparatively secondary.

Each network is therefore characterised by two independent descriptors: a
structural driver identifying which ingredient primarily organises its
memory hierarchy, and a dynamical organisation describing how memory is
distributed across depth scales. Together, these axes provide a
low-dimensional description of functional memory organisation across all
domains studied.

\medskip
\noindent
\textbf{Signed role of directionality.}
The ternary PNM coordinate measures the magnitude of directionality's
contribution to the memory hierarchy (Fig.~\ref{fig:regimes}b). To determine its sign, we define the
polarity response $\Delta_{\mathrm{pol}} = \NMR(p{=}0.5) - \NMR(p{=}0)$,
where $\NMR(p)$ denotes the non-Markovian ratio after polarity perturbation
at level $p$. Negative values indicate that directionality sustains
deep-memory organisation, positive values that it suppresses deep memory organisation,
and values near zero that directionality is dynamically neutral. This signed
response is independent of the magnitude measured by the ternary coordinate. 
Two networks can therefore exhibit comparable directionality contributions
while responding oppositely to polarity perturbation, further separating
the organisation of memory from its dynamical response to structural
perturbation.

\subsection{Cross-domain convergence and hidden functional divergence}

The two-axis decomposition reveals both hidden functional divergence
within shared structural classes and unexpected convergence across
unrelated domains. Networks that conventional structural analysis would
group together can occupy opposite regions of functional memory organisation space,
whereas systems with no common topology, scale, or substrate can converge
onto the same dynamical organisation. The convergence is not imposed by
the classification itself, but emerges directly from the cascade
observables derived from the thermodynamic framework.

\medskip
\noindent\textbf{Functional divergence within a shared biological class.}
The clearest illustration is provided by the \textit{E.~coli}
transcriptional regulatory network~\cite{ShenOrr2002} and the Figeys human protein
interactome~\cite{Ewing2007}. Both networks are
geometry-dominated and belong to the Diffuse organisation, indicating that
their hierarchical memory spectra are broadly distributed across depth
scales and organised primarily by weighted transport geometry. Yet they diverge in both mesoscale 
deviation and polarity response.

The \textit{E.~coli} network achieves a high absolute memory level
($\NMR \approx 0.398$) and lies marginally above the DCM null ensemble
($\Delta_{\mathrm{DCM}} \approx +0.004$), indicating that its mesoscale
organisation contributes slightly beyond what is encoded by the degree sequence alone.
Its large GFM Wasserstein distance ($W_1^{\mathrm{GFM}} \approx 1.15$, the
largest in the dataset) shows that weight geometry is the primary
organiser of its hierarchical memory structure. The polarity slope is
weakly negative ($s\approx -0.003$), indicating that directionality slightly sustains 
deep memory organisation.

The Figeys interactome exhibits the opposite trend. Although it remains geometry-dominated 
and Diffuse, it falls below the DCM null ensemble ($\Delta_{\mathrm{DCM}} \approx -0.022$),
indicating that its higher-order wiring organisation suppresses deep memory organisation 
relative to degree-sequence expectation. Its polarity slope is
strongly positive ($s\approx +0.047$), showing that directional perturbation increases memory depth rather than
suppressing it. Both the mesoscale deviation and the polarity response
therefore reverse sign relative to \textit{E.~coli}, revealing a divergence that conventional 
structural observables do not resolve.

\medskip

\noindent\textbf{Convergence of trophic systems.}
The food webs provide a particularly clear example of convergence at the
level of functional organisation despite diversity in structural drivers.
The six systems span estuarine, freshwater, and palaeontological ecosystems
separated by large differences in species count and geological timescale.
Five of the six exhibit positive GFM deviations, indicating that weighted
transport geometry deepens functional memory in those trophic systems. The
exception is the Chesapeake Bay wet-season food web~\cite{Baird1989},
whose near-zero effect ($\Delta_{\mathrm{GFM}} = -0.004$, signal-to-noise
ratio $0.44$) reflects the measurement resolution of its model-estimated
carbon flows rather than a genuine reversal of the geometric effect; replacing those 
flows with unit weights recovers a strongly positive
deviation ($\Delta_{\mathrm{GFM}} = +0.022$, SNR $= 11.9$), confirming
that the Chesapeake Bay trophic topology itself encodes the expected
geometric memory signal.
All six also exceed their DCM null ensemble, indicating that higher-order
wiring organisation beyond the degree sequence contributes an additional
positive increment to memory depth across every trophic system analysed.

The driver assignments split evenly: the three estuarine and wetland
systems (Chesapeake Bay dry-season, Chesapeake Bay wet-season, and the
mangrove wetland) are mesoscale-dominated, whereas the two freshwater
systems (Little Rock Lake and Stony Creek) and the palaeontological
system (Messel Shale) are geometry-dominated. Both interaction-strength
geometry and higher-order trophic wiring therefore contribute to memory
organisation, with the dominant ingredient tracking broad ecosystem type.

Despite this driver divergence, all six food webs occupy the same region
of dynamical organisation space. Five belong to the Reactive organisation
--- the three mesoscale-dominated wetland systems together with Little Rock
Lake and Stony Creek --- and one (Messel Shale) to the Stable organisation.
The three mesoscale-dominated food webs fall exclusively in Reactive,
while the three geometry-dominated systems split two-to-one between the
two deep-memory classes, with Stony Creek classified as borderline
Reactive ($p_{\max} = 0.50$).

A further convergence is observed in polarity response. Five of the six
food webs exhibit negative polarity slopes, indicating that edge
directionality sustains rather than suppresses deep memory organisation
in those trophic systems. The exception is the Chesapeake Bay wet-season
food web, whose slope of $+0.004$ is effectively zero, indicating polarity
neutrality consistent with its near-inert GFM response. This predominantly
negative directional signature is independent of structural driver and
dynamical organisation class, and distinguishes trophic systems as a group
from neural connectomes, where polarity responses are uniformly positive,
and gene regulatory networks, where they are predominantly positive.
Together, the shared deep-memory dynamical organisation and the dominant
negative polarity response suggest that trophic flow direction encodes a
fundamental principle of causal memory retention in ecological systems,
regardless of whether that memory is organised primarily by weighted
transport geometry or by higher-order wiring
structure~\cite{Williams2000,Dunne2002}.

\noindent\textbf{Neural connectomes: an exception to geometry dominance.}
Four of the five neural connectomes in the dataset are mesoscale-dominated,
indicating that higher-order circuit architecture contributes more strongly
than weighted transport geometry to the organisation of memory across depth
scales. This pattern holds for the somatic \textit{C.~elegans}
connectome, the hermaphrodite and male chemical synapse connectomes, and
the \textit{Drosophila} larval connectome, in all of which the DCM
Wasserstein deviation exceeds the GFM deviation, confirming that motifs,
feedback circuits, and hierarchical routing architecture organise memory
more strongly than interaction strengths alone.

\textit{C.~intestinalis} is the exception. Its ternary position places it
firmly in the geometry-dominated sector ($\phi^{\mathrm{GFM}} \approx 0.79$,
ranking sixth among all geometry-dominated networks in the dataset), with
the GFM Wasserstein deviation exceeding the DCM deviation by nearly
six-fold. Weight geometry, rather than higher-order wiring architecture,
is therefore the primary organiser of functional memory in this protochordate
connectome.

The dynamical organisations split across both driver groups. The somatic
\textit{C.~elegans} connectome, the hermaphrodite chemical synapse network,
the \textit{Drosophila} larval connectome, and \textit{C.~intestinalis}
all belong to the Reactive organisation, whereas the male
\textit{C.~elegans} chemical synapse network belongs to Stable.
Reactive membership is therefore conserved across the mesoscale-dominated
and geometry-dominated neural connectomes alike, indicating that deep memory
with concentrated compression dynamics is a shared dynamical signature of
neural circuits, independent of which structural ingredient primarily
organises that memory.

A further signature is shared by all five connectomes: every polarity slope
is positive, indicating that directionality universally suppresses rather
than sustains deep memory organisation in neural circuits. This is in
contrast to food webs, where directionality predominantly sustains deep
memory organisation, with only one system showing polarity neutrality.
Neural connectomes therefore remain a structurally coherent group at the
level of dynamical regime and polarity response, even as they diverge in
structural driver.

\begin{figure*}[th!]
    \includegraphics[width=\textwidth]{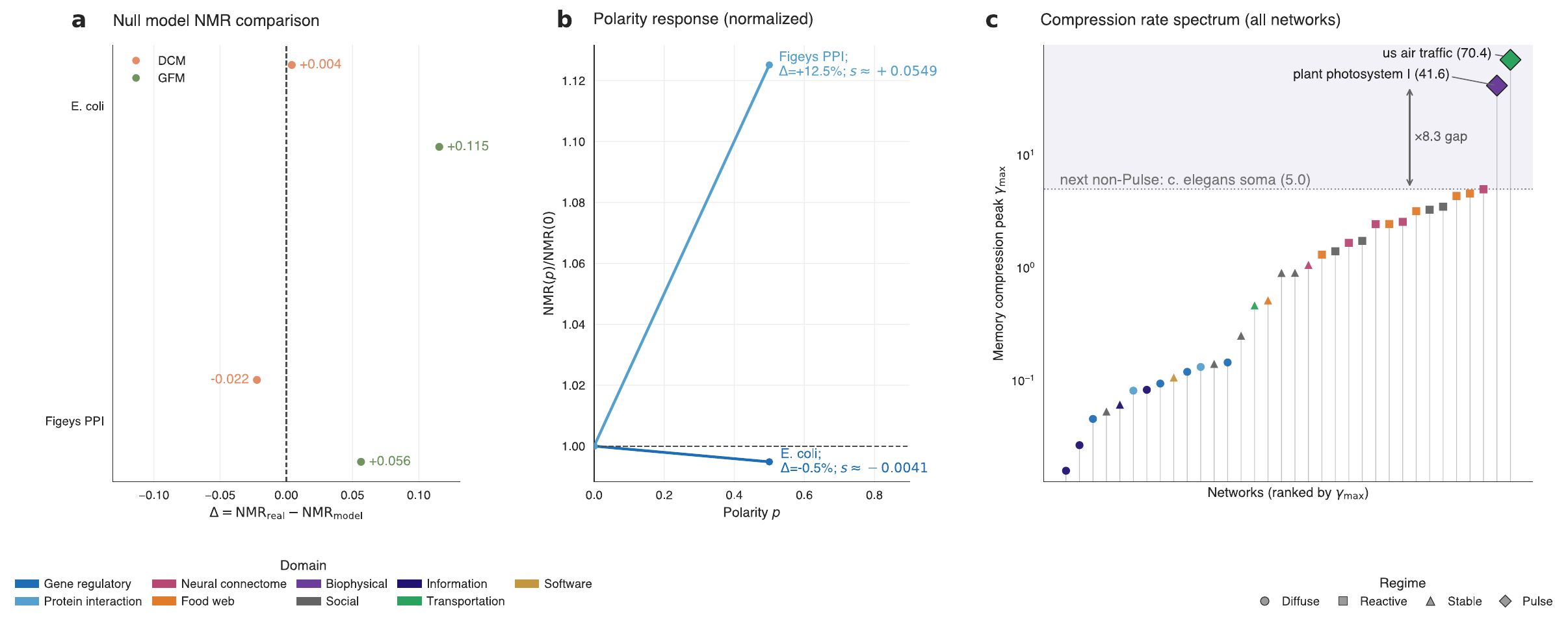}
    \caption{\textbf{Hidden functional divergence and cross-domain convergence
revealed by thermodynamic decomposition.}
\textbf{a}, Null-model comparison for the \textit{E.~coli}
transcriptional regulatory network and the Figeys human protein
interactome, two systems sharing the same geometry-dominated driver and
Diffuse memory organisation. The horizontal axis shows
$\Delta=\NMR_{\mathrm{real}}-\NMR_{\mathrm{model}}$ for the DCM (orange)
and GFM (green) null ensembles; the real-network NMR is anchored at the
dashed reference ($x=0$). Error bars indicate 95\% bootstrap confidence
intervals. Despite their identical classification, the two networks diverge
systematically: \textit{E.~coli} lies marginally above the DCM ensemble
($\Delta_{\mathrm{DCM}}\approx +0.004$) and well above the GFM ensemble
($\Delta_{\mathrm{GFM}}\approx +0.115$), whereas Figeys lies below the
DCM ensemble ($\Delta_{\mathrm{DCM}}\approx -0.022$) while remaining above
the GFM ensemble ($\Delta_{\mathrm{GFM}}\approx +0.056$).
\textbf{b}, Normalised polarity response $\mathrm{NMR}(p)/\mathrm{NMR}(0)$
as a function of polarity perturbation $p\in[0,0.5]$ for the same two
networks. \textit{E.~coli} is nearly invariant to directional perturbation
($s\approx -0.003$), whereas Figeys exhibits a strong positive response
($s\approx +0.047$), increasing by approximately $13\%$ at $p=0.5$.
The opposite polarity responses are orthogonal to the shared Diffuse
organisation and are not captured by conventional structural descriptors.
\textbf{c}, Maximum compression rate $\gamma_{\max}$ for all thirty-four
networks, ranked in ascending order (log scale). The two Pulse systems,
US air traffic ($\gamma_{\max}\approx 70.4$) and plant Photosystem~I
($\gamma_{\max}\approx 41.6$), are sharply separated from the remainder of
the dataset despite sharing no common topology, scale, or physical
substrate. One system transports passengers through a global mobility
network, whereas the other transports excitonic energy through a
photosynthetic complex. Their convergence onto the same compression
organisation illustrates the strongest example of cross-domain functional
similarity observed in the dataset. Marker colour denotes domain and marker
shape denotes dynamical organisation.}
 \label{fig:convergence-vs-divergence}
\end{figure*}

\medskip 
\noindent\textbf{Extreme convergence across unrelated physical substrates.}
The strongest convergence in the dataset occurs between plant
Photosystem~I~\cite{Mohseni2008}
($\NMR \approx 0.073$, $\Delta_{\mathrm{DCM}} \approx +0.022$,
$\Delta_{\mathrm{GFM}} \approx +0.022$,
$\gamma_{\max} \approx 41.6$)
and US air traffic~\cite{Barrat2004}
($\NMR \approx 0.045$, $\Delta_{\mathrm{DCM}} \approx -0.017$,
$\Delta_{\mathrm{GFM}} = 0$, $\gamma_{\max} \approx 70.4$). Despite one
system transporting excitonic energy and the other transporting passengers,
both collapse onto the Pulse organisation.

In both cases, compression dynamics are dominated by a single characteristic
compression scale. US air traffic exhibits the largest compression peak in
the dataset ($\gamma_{\max}\approx 70.4$), while Photosystem~I
($\gamma_{\max}\approx 41.6$) is the only other system in the same range.
Both are sharply separated from all non-Pulse networks, whose largest
compression peak is approximately $5.0$
(Fig.~\ref{fig:convergence-vs-divergence}). The two systems differ in
their structural drivers and null-model responses: US air traffic is
mesoscale-dominated and inert under GFM, whereas Photosystem~I is
geometry-dominated and remains separated from both DCM and GFM ensembles.
Their convergence is therefore not structural, but dynamical: both organise
functional memory through an extreme concentration of compression at one
dominant scale.

Taken together, these examples show that systems organise according to
thermodynamic memory structure rather than structural similarity alone.
Networks with similar biological class and shared memory organisation can
diverge in mesoscale and polarity response, whereas systems with unrelated
substrates can converge onto the same organisation of functional memory.
The thermodynamic decomposition thereby reveals organisational principles
that remain invisible to conventional structural network analysis.

\section{Discussion}
This work identifies functional memory organisation as a measurable and
cross-domain dimension of complexity in directed networks. In thirty-two of
the thirty-four empirical systems analysed, real interaction strengths
organise memory at greater hierarchical depth than randomised weighted
transport geometry on the same topology, with confidence intervals that
entirely exclude zero. A thirty-third system, US air traffic, satisfies
the relation with equality ($\Delta_{\mathrm{GFM}} = 0$ exactly): its
thermalisation cascade is determined entirely by hub topology, and
interaction frequencies carry no additional memory structure beyond what
binary connectivity provides. The Chesapeake Bay wet-season food web is
the one genuine exception, with $\Delta_{\mathrm{GFM}} = -0.004$; its
carbon flows are model-estimated seasonal averages~\cite{Baird1989} rather
than directly measured interaction strengths, and binarising the network
recovers a strongly positive GFM deviation ($\Delta_{\mathrm{GFM}} = +0.022$,
SNR $= 11.9$), confirming that the exception reflects the resolution of
the flow estimates rather than a property of the ecosystem. The mechanism
is the same in all positive cases: coherent weight geometry amplifies deep
contributions along paths, whereas randomisation suppresses these
correlations and shifts the dynamics toward shallower effective memory.
Functional organisation is therefore encoded not only in which interactions
exist, but also in how their strengths sustain long-range causal integration
under progressive memory compression.

These results clarify the limits of topology alone. Binary architecture
determines the set of admissible paths, but it does not determine how
strongly distant histories remain integrated through them. The Geometry
Fluctuation Model isolates the contribution of weighted transport geometry
by preserving the binary skeleton while perturbing interaction strengths.
Across all domains where weight geometry is not dynamically inert, this
perturbation systematically reduces memory depth, demonstrating that
weighted organisation contains information about functional memory that is
not recoverable from topological structure alone. Treating weights as
secondary refinements or discarding them through binarisation removes a
primary determinant of how real systems organise causal history across
scales.

Beyond its role as a comparative tool, the GFM comparison provides an
operational criterion for whether a network's edge weights encode
functionally meaningful interaction structure. A positive GFM deviation
indicates that the specific assignment of interaction strengths on a given
topology deepens causal memory beyond what random weight assignment
produces, confirming that the weights carry functional information not
recoverable from binary connectivity alone. A near-zero or negative
deviation, by contrast, signals that the weights are effectively inert for
cascade memory organisation: either the cascade is determined entirely by
topology, as in US air traffic, or the weight values are model-estimated
proxies whose resolution does not suffice to distinguish their effect from
random assignment on the same skeleton, as in the wet-season food web.
Applied prospectively, the GFM test therefore provides a principled
diagnostic for whether a proposed weight configuration faithfully represents
the functionally relevant interaction geometry --- an assessment that cannot
be made from structural properties of the network alone.

The thermodynamic formulation provides a common framework for comparing
systems with otherwise unrelated substrates and dynamics. The
infinite-memory limit defines the equilibrium reference relative to which
finite-memory dynamics are measured, and the cascade times $\tau_n$
identify the stages at which correlations beyond depth $n$ become
effectively redundant. Memory loss is therefore resolved as a hierarchy of
compression events rather than a single relaxation scale. The effective
memory depth, hierarchical memory spectrum, and compression profile
together quantify how deeply systems integrate causal history, how that
memory is distributed across depths, and how rapidly it compresses into
shallow effective dynamics.

Beyond the universal role of weight geometry, the null-model decomposition
reveals a low-dimensional organisation of functional memory. Despite
splitting evenly between geometry-dominated and mesoscale-dominated
structural drivers, all six food webs occupy deep-memory dynamical
organisations, and directionality predominantly sustains rather than
suppresses causal integration across trophic systems, a signature that
holds regardless of ecosystem type or geological timescale. Among neural
connectomes, the directional response is reversed: directionality
suppresses deep memory in all five circuits, whether the primary
structural organiser is mesoscale wiring, as in four of the five
connectomes, or weight geometry, as in \textit{C.~intestinalis}.
Photosystem~I and US air traffic, despite transporting entirely different
physical quantities, both exhibit extreme compression concentrated at a
single characteristic scale, converging on the Pulse organisation through
physically distinct mechanisms. None of these correspondences follow from
domain labels or topological similarity; all emerge from the cascade
observables of the thermodynamic framework.

The dataset is intentionally heterogeneous rather than exhaustive, and
this heterogeneity is central to the empirical result. The geometric
effect persists across gene regulation, neural circuits, ecological webs,
social communication, citation networks, software dependencies, biophysical
energy transfer, and transportation infrastructure, spanning several orders
of magnitude in size and density. The framework also generates falsifiable
predictions. In systems where directionality suppresses deep memory
($s > 0$), complete edge reversal ($p = 1$) should partially restore the
original directionality structure and recover a non-Markovian ratio close
to its unperturbed value, whereas partial scrambling ($p = 0.5$) produces
the largest deviation. Quantitatively, the recovery ratio
$(\NMR(p{=}1)-\NMR(p{=}0))/s$ is predicted to anticorrelate with the
directional asymmetry index $(\|A-A^\top\|_F)/\|A\|_F$, because highly
asymmetric systems possess less reversal symmetry to exploit. This
prediction links the dynamical organisation axis to a measurable
topological property and provides an independent test of whether the
observed organisations capture functionally meaningful distinctions rather
than statistical regularities.

Several extensions follow naturally from the framework. Temporal
networks~\cite{Holme2012} would allow interaction strengths to evolve
independently of the thermodynamic control parameter, making it possible
to study how organisations of functional memory persist or reorganise as
weight geometry changes over time. Multilayer
networks~\cite{Kivela2014,Moutuou2023} raise the question of whether
different interaction channels sustain distinct memory organisations that
compete or reinforce across layers. The framework also suggests connections
with information-theoretic approaches to complexity. The cascade times
$\tau_n$ measure how many depth scales are required before
infinite-history dynamics admit an effective finite representation. This is
structurally analogous to logical depth~\cite{Bennet1990},
Kolmogorov--Sinai entropy~\cite{Lloyd2001}, and algorithmic
complexity~\cite{Antunes2006}, where complexity reflects the cost of
recovering structure from compressed descriptions. Establishing this
connection formally could clarify why the space of functional memory
organisation appears intrinsically low-dimensional across real systems.

\section{Methods}

\subsection{Memory propagators and the thermalization cascade}
A directed network is modelled as a weighted directed graph
$G = (V, E)$ with adjacency matrix $A$, where $A_{ij} \geq 0$ encodes
the interaction strength from node $j$ to node $i$. This convention
makes columns correspond to sources and rows to targets; self-loops are
permitted. Two representations are used depending on the network.
For networks with integer-valued interaction multiplicities, $A_{ij}$
counts the number of parallel directed edges from $j$ to $i$, and $G$
is a directed multigraph. For networks with real-valued interaction
strengths (such as the food web carbon flows used here) $A_{ij}$
records a continuous non-negative weight encoding the magnitude of the
interaction. The thermodynamic framework applies identically in both
cases: the matrix power $(A^k)_{ij}$ aggregates contributions from
all directed walks of length $k$ from $j$ to $i$, weighted by
interaction multiplicities or strengths, and the resolvent construction
below is unchanged.

The framework distinguishes two timescales. A microscopic timescale
governs probability flow through the propagators $T_t^{(n)}$, whereas
a slow macroscopic timescale $t$ controls the suppression of memory
depth through an inverse-temperature parameter $\beta(t)$. At fixed
macroscopic time, $\beta(t)$ is treated as quasi-static~\cite{Callen1985}, and the
propagators describe the effective microscopic dynamics under that
thermodynamic condition.

For a memory cutoff $n$, the corresponding truncated resolvent is
\begin{equation}
  R_t^{(n)} = \sum_{k=0}^{n} e^{-\beta(t)k} A^k,
  \label{eq:resolvent_finite_methods}
\end{equation}
where $R_t^{(n)}(i,j)$ is the aggregated contribution of all directed walks from $j$
to $i$ of length at most $n$, with a walk of length $k$ penalized by
$e^{-\beta(t)k}$. The $n$-memory propagator is obtained by column
normalization:
\begin{equation}
  T_t^{(n)}(i,j)
  = \frac{R_t^{(n)}(i,j)}{\sum_{i'} R_t^{(n)}(i',j)},
  \label{eq:Tn_methods}
\end{equation}
whenever the denominator is non-zero. Thus $T_t^{(n)}$ is the effective transition 
matrix associated with a finite-memory random walk that samples paths of 
length at most $n$. 

The propagators therefore describe microscopic flow dynamics conditioned on
a slowly varying thermodynamic environment parametrized by $t$.

Let $\rho(A)$ denote the spectral radius of $A$ and set
$\beta_c = \log\rho(A)$. For $\beta(t) > \beta_c$, the Neumann series
converges to the full resolvent:
\begin{equation}
  R_t^{(\infty)} = \sum_{k=0}^{\infty} e^{-\beta(t)k} A^k
  = \bigl(I - e^{-\beta(t)} A\bigr)^{-1}.
  \label{eq:resolvent_full_methods}
\end{equation}
After normalization, this defines the infinite-history propagator
$T_t^{(\infty)}$. In the graph $C^\ast$-algebraic formulation, this
equilibrium propagator corresponds to the KMS state at inverse temperature
$\beta(t)$~\cite{Moutuou2025a,Moutuou2025b}. Note that the resolvent is the
operator object from which this equilibrium propagator is obtained;
it is not itself the KMS state.

As the macroscopic control parameter $t$ increases, the corresponding
inverse temperature $\beta(t)$ progressively suppresses contributions from
long walks. For a fixed tolerance $\varepsilon > 0$, the $n$-thermalization time is
\begin{equation}
  \tau_n = \inf\!\left\{
    t : \frac{\|R_t^{(\infty)} - R_t^{(n)}\|_F}{\|R_t^{(\infty)}\|_F}
    \leq \varepsilon
  \right\}.
  \label{eq:tau_methods}
\end{equation}
The ordered sequence $\tau_0 \ge \tau_1 \ge \tau_2 \ge \cdots$ defines the
thermalization cascade and records the macroscopic times at which
successively deeper memory cutoffs become sufficient to approximate the
infinite-history propagator. The value $\varepsilon = 10^{-5}$ was used
throughout (see S8 in SI).

The effective memory depth is the minimal cutoff needed at time $t$:
\begin{equation}
  \mathrm{EMD}(t) = \min\!\left\{
    n \geq 0 : \frac{\|R_t^{(\infty)} - R_t^{(n)}\|_F}{\|R_t^{(\infty)}\|_F}
    \leq \varepsilon
  \right\},
  \label{eq:emd_methods}
\end{equation}
a non-increasing step function of $t$, takes the value $n$ on the interval
$[\tau_n, \tau_{n-1})$.

In this formulation, the cascade quantifies how rapidly infinite-history
dynamics admit finite-depth effective descriptions under progressive memory
suppression.

\subsection{Randomized estimation of effective memory depth}
For large networks, explicitly forming the infinite-memory resolvent
$R_t^{(\infty)}$ becomes computationally expensive and numerically
ill-conditioned near the critical inverse temperature $\beta_c$. We
therefore estimate the relative Frobenius error in
equation~\eqref{eq:tau_methods} using randomized trace
estimation~\cite{Hutchinson1990,Avron2011}.

Let $z \in \mathbb{R}^N$ be a random probe vector with independent
Rademacher entries
\[
\mathbb{P}(z_i = 1)=\mathbb{P}(z_i=-1)=\frac12.
\]
For any matrix $M$,
\begin{equation}
  \|M\|_F^2
  = \mathrm{Tr}(M^\top M)
  = \mathbb{E}\!\left[\|Mz\|_2^2\right].
  \label{eq:hutchinson_methods}
\end{equation}
The squared relative Frobenius error can therefore be approximated by
Monte Carlo averaging over $m$ independent probe vectors:
\begin{equation}
  \frac{\|R_t^{(\infty)} - R_t^{(n)}\|_F^2}
       {\|R_t^{(\infty)}\|_F^2}
  \approx
  \frac{
    m^{-1}\sum_{j=1}^{m}
    \left\|
      (R_t^{(\infty)} - R_t^{(n)})z^{(j)}
    \right\|_2^2
  }{
    m^{-1}\sum_{j=1}^{m}
    \left\|
      R_t^{(\infty)} z^{(j)}
    \right\|_2^2
  }.
  \label{eq:ratio_estimator_methods}
\end{equation}
Although the cascade definitions in
equations~\eqref{eq:tau_methods}--\eqref{eq:emd_methods}
use the unsquared relative Frobenius norm, the squared estimator preserves
the ordering of thermalization times and effective memory depths while
improving numerical efficiency.

The action of the infinite-memory resolvent on a probe vector is computed
iteratively through the Neumann recursion
\[
v_{k+1}=e^{-\beta(t)}A\,v_k,
\qquad
v_0=z,
\]
while accumulating the partial sums
\[
\sum_{k=0}^{K} v_k.
\]
The truncated action $R_t^{(n)}z$ is obtained by stopping the accumulation
at depth $n$. The iteration converges whenever
\[
e^{-\beta(t)}\rho(A)<1,
\]
equivalently whenever $\beta(t)>\beta_c$.

All reported computations used $m=20$ independent probe vectors.
Increasing the probe count produced negligible changes in the resulting EMD
curves across the empirical networks. The estimator therefore provides an
efficient approximation of how rapidly infinite-history dynamics admit
finite-depth effective representations under progressive memory
suppression.

\subsection{Hierarchical memory spectra and Wasserstein comparison}

The hierarchical memory spectrum of a network is the probability
distribution
\begin{equation}
  \mu_G(l)
  =
  \frac{\tau_{l-1}-\tau_l}{\tau_0},
  \qquad l\ge1,
  \label{eq:hms_methods}
\end{equation}
normalized so that
\[
\sum_{l\ge1}\mu_G(l)=1.
\]
The spectrum records how the total thermalization time is distributed
across memory depths. Large tail mass indicates that thermalization
remains distributed across deep memory scales, whereas broad
high-entropy spectra indicate diffuse multiscale organisation across the
hierarchy.

To compare memory organisation across networks and null ensembles, we use
the 1-Wasserstein distance~\cite{Villani2009} on the ordered depth support
$\{1,\ldots,l_{\max}\}$. For probability measures $\mu$ and $\nu$,
\begin{equation}
  W_1(\mu,\nu)
  =
  \sum_{l=1}^{l_{\max}}
  \left|
    \sum_{k=1}^{l}\mu(k)
    -
    \sum_{k=1}^{l}\nu(k)
  \right|,
  \label{eq:wasserstein_methods}
\end{equation}
corresponding to the $L^1$ distance between cumulative distribution
functions on the depth axis. This metric is naturally adapted to
hierarchical memory spectra because memory depth is intrinsically
ordered: redistributing spectral mass from depth $i$ to depth $j$
incurs a cost proportional to $|i-j|$. Wasserstein distances therefore
quantify not only whether two systems differ in memory organisation, but
also the depth scales over which those differences are distributed.

For each empirical network, spectra are compared against ensembles
generated by the null models described below. When ensemble means are
reported, Wasserstein distances are computed relative to the average
null spectrum across realizations; when confidence intervals are needed,
the full distribution of null realizations is retained.

To resolve how deviations evolve across the hierarchy, we compute
depth-resolved curves
\[
W_1^{\mathrm{model}}(n),
\]
obtained by truncating the spectra at successive reference depths $n$.
Because the cascade progressively suppresses deeper contributions first,
deviations at large reference depths correspond to resistance against
compression of long-range causal history. The resulting curves therefore
provide a depth-resolved measure of resistance to memory compression
across the hierarchy.

\subsection{Thermodynamic protocol}
Recall that the framework distinguishes two timescales. Microscopic probability flow is
described by the propagators $T_t^{(n)}$, whereas a slow macroscopic
timescale governs the progressive suppression of memory depth through the
inverse-temperature parameter $\beta(t)$. At fixed macroscopic time, the
propagators evolve under quasi-static thermodynamic conditions determined by
$\beta(t)$.

The macroscopic evolution of memory suppression is governed by the protocol
\begin{equation}
  \beta(t)
  =
  \beta_c
  +
  \left[
    \log\!\left(1+\frac{t}{\kappa}\right)
  \right]^{\alpha},
  \label{eq:protocol_methods}
\end{equation}
with
\[
\kappa=\frac{N}{\bar d},
\qquad
\alpha
=
1+\frac{1}{\log(1+\bar d)},
\]
where $N$ is the number of nodes and $\bar d$ is the mean out-degree. The
protocol satisfies $\beta(0)=\beta_c$ and increases monotonically with
$t$, progressively suppressing contributions from long walks while
resolving early memory-depth transitions over extended macroscopic times.

The normalization
\[
\kappa=\frac{N}{\bar d}
\]
rescales the macroscopic clock according to network sparsity, allowing
networks with different sizes and connectivities to be compared on
comparable thermodynamic timescales. Sparse networks therefore evolve more
slowly under the protocol than dense networks of similar size.

The logarithmic dependence on $t$ is motivated by aging phenomena in
disordered trap hierarchies and glassy relaxation
processes~\cite{Bouchaud1992,Monthus1996}, where equilibration occurs
through progressively slower exploration of deeper configurational states.

For each memory depth $n$, the corresponding protocol-independent
inverse-temperature threshold is
\begin{equation}
  \beta_n
  =
  \inf\!\left\{
    \beta>\beta_c :
    \frac{
      \|R_\beta^{(\infty)}-R_\beta^{(n)}\|_F
    }{
      \|R_\beta^{(\infty)}\|_F
    }
    \le\varepsilon
  \right\},
  \label{eq:beta_thresh_methods}
\end{equation}
which depends only on the network adjacency matrix $A$ and the tolerance
$\varepsilon$. The thermalization times are then obtained through the
macroscopic clock relation
\[
\tau_n=\beta^{-1}(\beta_n).
\]
The sequence $\{\beta_n\}$ therefore characterises the intrinsic hierarchy
of memory scales encoded by the network, whereas the protocol
$\beta(t)$ determines how that hierarchy unfolds across macroscopic time.

\subsection{Null model ensembles}

Three null ensembles are used as counterfactual perturbations of each
empirical network. Each ensemble selectively disrupts a distinct structural
ingredient while preserving the others, allowing the contribution of
mesoscale organisation, weight geometry, and directionality to functional
memory organisation to be separated quantitatively. All observables were
computed from independent realizations of each ensemble.

\medskip
\noindent\textbf{Directed Configuration Model (DCM).}
The Directed Configuration Model preserves the in-degree and out-degree
sequences of every node while randomizing the global wiring pattern.
Realizations are generated through sequential edge rewiring until
convergence toward a uniform sample over directed graphs with the specified
degree sequences. Edge weights are preserved under rewiring.

Comparisons against the DCM isolate contributions arising from mesoscale
organisation beyond local degree constraints, including motif structure,
feedback cycles, hierarchical routing, and community organisation. Large
DCM Wasserstein deviations therefore indicate that the specific arrangement
of connections, rather than degree statistics alone, plays a dominant role
in organizing functional memory across depths. For each empirical network,
100 independent DCM realizations were generated.

\medskip
\noindent\textbf{Geometry Fluctuation Model (GFM).}
The Geometry Fluctuation Model preserves the binary topology exactly while
randomizing edge weights on the fixed directed skeleton. For
integer-valued weights, each nonzero entry $A_{ij}$ encodes the
multiplicity of parallel directed interactions from $j$ to $i$; changing
these multiplicities modifies the geometry of directed transport, including
weighted path structure, effective distances, and curvature-based
properties such as Forman--Ricci curvature~\cite{Forman2003,Weber2017},
while preserving the underlying binary graph. For real-valued weights,
$A_{ij}$ encodes a continuous interaction strength, and perturbation
modifies the geometric flow structure while preserving directed connectivity.

Each nonzero edge weight $A_{ij}$ is replaced by an independent random
draw preserving the mean: a zero-truncated Poisson variate with mean
$A_{ij}$ for integer-valued weights, and an exponential variate with mean
$A_{ij}$ for real-valued weights. In both cases, directed paths, strongly
connected components, and all unweighted topological observables are
preserved, whereas correlations between edge weights along directed
pathways are destroyed.

Comparisons against the GFM isolate the contribution of geometric flows to functional memory 
organisation while holding the binary
topology fixed. Large GFM deviations indicate that the specific geometric
organisation of interaction multiplicities sustains deep memory structure
beyond what can be explained by topology alone. For each empirical
network, 100 independent GFM realizations were generated.

\medskip
\noindent\textbf{Polarity Noise Model (PNM).}
The Polarity Noise Model perturbs edge directionality while preserving the
underlying interaction structure. Each directed edge is reversed
independently with probability $p$, while self-loops remain unchanged.

The parameter $p=0$ recovers the original network, whereas $p=0.5$
corresponds to maximal directional disorder, where each edge orientation is
an independent fair coin flip. The primary polarity analysis uses
\[
p\in\{0,0.01,0.1,0.2,0.3,0.5\},
\]
with 20 independent realizations generated at each polarity level.

The polarity response is quantified by
\begin{equation}
  \Delta_{\mathrm{pol}}
  =
  \NMR(p=0.5)-\NMR(p=0),
  \label{eq:delta_pol_methods}
\end{equation}
where $\NMR(p)$ denotes the ensemble-averaged non-Markovian ratio at
polarity level $p$. Negative values indicate that the original
directionality sustains deep memory organisation, because randomizing edge
directions reduces the non-Markovian ratio. Positive values indicate that
directionality suppresses deep memory organisation, because directional
randomization increases the non-Markovian ratio.

To quantify continuous sensitivity to directional perturbation, we also
compute the polarity slope
\[
s=\frac{\mathrm{d}\NMR}{\mathrm{d}p},
\]
estimated by ordinary least squares over the interval
$p\in[0,0.5]$. The polarity slope complements
$\Delta_{\mathrm{pol}}$ by measuring the rate at which memory organisation
responds to progressive directional disorder.

\medskip
\noindent\textbf{Ternary decomposition.}
To compare the relative contributions of mesoscale organisation, weight
geometry, and directionality, the Wasserstein deviations at reference depth
$l^\ast=20$ are normalized as
\begin{equation}
  \phi^{\mathrm{model}}
  =
  \frac{W_1^{\mathrm{model}}}{\Sigma},
  \qquad
  \Sigma
  =
  W_1^{\mathrm{DCM}}
  +
  W_1^{\mathrm{GFM}}
  +
  W_1^{\mathrm{PNM}},
  \label{eq:ternary_methods}
\end{equation}
yielding barycentric coordinates
\[
(\phi^{\mathrm{DCM}},
 \phi^{\mathrm{GFM}},
 \phi^{\mathrm{PNM}})
\]
within a ternary simplex. The three vertices correspond to idealized
dominance of mesoscale organisation, weight geometry, and directionality,
respectively.

The reference depth $l^* = 20$ was chosen as the deepest cascade level
that is uniformly available across all thirty-four corpus networks;
individual networks may have valid cascade information at greater depths,
but depth $20$ is the deepest level common to the full corpus and
therefore suitable for cross-network comparison.  It lies beyond the
shallow transient in which the HMS is dominated by short-range
compression events, and within the regime where the depth-resolved
Wasserstein curves vary smoothly and the cascade structure has stabilised
across networks.  The sensitivity of all cascade observables to
one-decade perturbations of the convergence threshold~$\varepsilon$ is
quantified in Supplementary Section 8; the
sensitivity of the derived classifications to nearby reference depths
is assessed in Supplementary Section 9.

Networks satisfying $\Sigma<0.01$ are classified as low-signal, indicating simultaneously weak separation
from all three null ensembles. No empirical network in the dataset
satisfied this criterion.

\subsection{Probabilistic classification of functional memory organisation}
\label{sec:classification}

To summarize the diversity of functional memory organisation across the
empirical networks, we performed an unsupervised Gaussian mixture analysis
on thermodynamic observables derived from the thermalization cascade. The
resulting latent components were interpreted along the two complementary
descriptive axes: the structural driver axis, identifying which null-model
perturbation produces the largest reorganisation of the memory hierarchy,
and the dynamical organisation axis, describing how memory is distributed,
compressed, and reorganised across depth scales.

Cluster assignments were obtained before semantic labels were applied. The
reported organisations therefore summarize emergent latent structure in the
thermodynamic observables rather than impose predefined categories on the
data.

\medskip
\noindent\textbf{Feature construction and model selection.}
The feature set combined observables describing memory depth, hierarchical
organisation, compression dynamics, null-model separation, and polarity
response. These included summaries of the effective memory depth,
hierarchical memory entropy and tail mass, compression-rate statistics,
Wasserstein deviations from the null ensembles, the polarity response
$\Delta_{\mathrm{pol}}$, the polarity slope $s$, and the ternary
coordinates
\[
(\phi^{\mathrm{DCM}},
 \phi^{\mathrm{GFM}},
 \phi^{\mathrm{PNM}}).
\]

All features were standardized by $z$-score normalization before fitting.
Gaussian mixture models were then fitted over a range of component numbers,
and the selected partitions minimized the Bayesian Information Criterion
(BIC). Bootstrap resampling with label alignment was used to estimate
assignment stability and posterior uncertainty.

\medskip
\noindent\textbf{Structural driver axis.}
Driver labels are assigned deterministically from the ternary coordinates
$(\phi^{\mathrm{DCM}}, \phi^{\mathrm{GFM}}, \phi^{\mathrm{PNM}})$
defined in equation~\eqref{eq:ternary_methods}, evaluated in the
following priority order.

\begin{enumerate}
\item \textbf{Dir} (directionality-dominated): $\phi^{\mathrm{PNM}}$
is the largest of the three coordinates.  A safety-net override
additionally reclassifies any network to Dir when $\phi^{\mathrm{PNM}}
\geq 0.45$ and $|s| \geq 0.06$, catching near-dominant PNM cases that
narrowly lose the argmax to $\phi^{\mathrm{DCM}}$ or $\phi^{\mathrm{GFM}}$
but carry a strong empirical polarity-response signal.

\item \textbf{Geo-Meso} (jointly geometry- and mesoscale-dominated):
$\phi^{\mathrm{PNM}} < 0.10$ and the relative DCM--GFM imbalance
\[
\frac{|\phi^{\mathrm{DCM}} - \phi^{\mathrm{GFM}}|}
     {\phi^{\mathrm{DCM}} + \phi^{\mathrm{GFM}}}
\;<\; 0.05,
\]
indicating that weighted transport geometry and mesoscale wiring
contribute comparably, with neither clearly dominant.

\item \textbf{Meso} (mesoscale-dominated): $\phi^{\mathrm{DCM}}$ is the
largest coordinate, indicating that higher-order wiring organisation
beyond the degree sequence is the primary structural contributor to the
memory hierarchy.

\item \textbf{Geo} (geometry-dominated): $\phi^{\mathrm{GFM}}$ is the
largest coordinate, indicating that weighted transport geometry is the
primary structural organiser.  Geometry-dominated networks are further
sub-ranked by the geometric dominance score $\phi^{\mathrm{GFM}} -
\phi^{\mathrm{DCM}}$ and divided into three equal tertiles:
Geo-H (highest dominance), Geo-M, and Geo-L, reflecting the degree to
which weight geometry dominates over mesoscale wiring within the
geometry-dominated sector.
\end{enumerate}

\medskip
\noindent\textbf{Dynamical organisation axis.}
The organisation labels are assigned post-hoc to the GMM clusters by
the following sequential criteria, applied to cluster-level summary
statistics (medians across cluster members).

\begin{itemize}

\item \textbf{Pulse}: the cluster with the highest median $\gamma_{\max}$,
provided it exceeds twice the next cluster's median $\gamma_{\max}$ or
lies above the $85$th percentile of the corpus-wide $\gamma_{\max}$
distribution.  This identifies systems whose compression profile is
dominated by a single extreme mode.

\item \textbf{Diffuse}: among the remaining clusters, the one with the
highest identification score $H_{\mathrm{clust}} + 0.20\,
\overline{\mathrm{NMR}}_{\mathrm{clust}}$, where $H_{\mathrm{clust}}$ is
the median normalised HMS entropy and $\overline{\mathrm{NMR}}$ is the
median non-Markovian ratio.  This selects the cluster with the broadest,
most high-entropy memory distribution.

\item \textbf{Reactive} and \textbf{Stable}: the two remaining
clusters are separated by the polarity reactivity score
\[
R_{\mathrm{clust}}
\;=\;
\mathrm{median}\bigl(\lvert s_i\rvert\bigr)
\;+\;
\mathrm{median}\bigl(\Delta^{\mathrm{range}}_i\bigr),
\]
where $s_i$ is the OLS slope of $\mathrm{NMR}(p)$ over $p\in[0,0.5]$
and $\Delta^{\mathrm{range}}_i = \max_p \mathrm{NMR}(p) -
\min_p \mathrm{NMR}(p)$ is the full polarity response range for network
$i$.  The cluster with the higher $R_{\mathrm{clust}}$ is tentatively
labelled Reactive.  A consistency check is then applied: if the
tentatively named Reactive cluster has a lower median $\gamma_{\max}$
than the tentatively named Stable cluster --- indicating that the
polarity score was inverted by an outlier network --- the two labels are
swapped.  This guard ensures that the label assignment is consistent with
the empirical separator between the two classes, which is the concentration
of compression dynamics ($\gamma_{\max}$) rather than the magnitude of
polarity response alone.

\end{itemize}

The resulting organisations reflect distinct dynamical signatures.
Diffuse systems exhibit high-entropy memory distributions spread across
many depth scales.  Reactive systems sustain deep memory with concentrated
compression dynamics and moderate directional sensitivity.  Stable systems
sustain deep memory with more diffuse compression profiles and weak
directional sensitivity.  Pulse systems exhibit compression concentrated
in a single dominant mode at one characteristic scale.  The final label
is therefore a combination of the initial polarity-score ordering and the
compression-dynamics consistency check, making it robust to outlier
networks whose polarity slopes are atypically large.

Full network assignments are reported in Supplementary Table~S1.

\bibliographystyle{naturemag}
\bibliography{../../../bib/biblio}

\pagebreak
{\centering\textbf{\Huge Supplementary Information}}

\renewcommand{\theequation}{S\arabic{equation}}
\renewcommand{\thefigure}{S\arabic{figure}}
\renewcommand{\thetable}{S\arabic{table}}
\renewcommand{\thesection}{S\arabic{section}}
\renewcommand{\thetheorem}{S\arabic{section}.\arabic{theorem}}
\makeatletter
\@ifundefined{theHsection}{}{\renewcommand{\theHsection}{supp.\arabic{section}}}
\@ifundefined{theHfigure}{}{\renewcommand{\theHfigure}{supp.\arabic{figure}}}
\@ifundefined{theHequation}{}{\renewcommand{\theHequation}{supp.\arabic{equation}}}
\makeatother
\setcounter{section}{0}
\setcounter{equation}{0}
\setcounter{figure}{0}
\setcounter{table}{0}
\setcounter{theorem}{0}


\section{Minimal example: memory compression in a single-node network}
\label{sec:single_node}

\begin{figure}[th]
  \centering
  \includegraphics[width=0.45\textwidth]{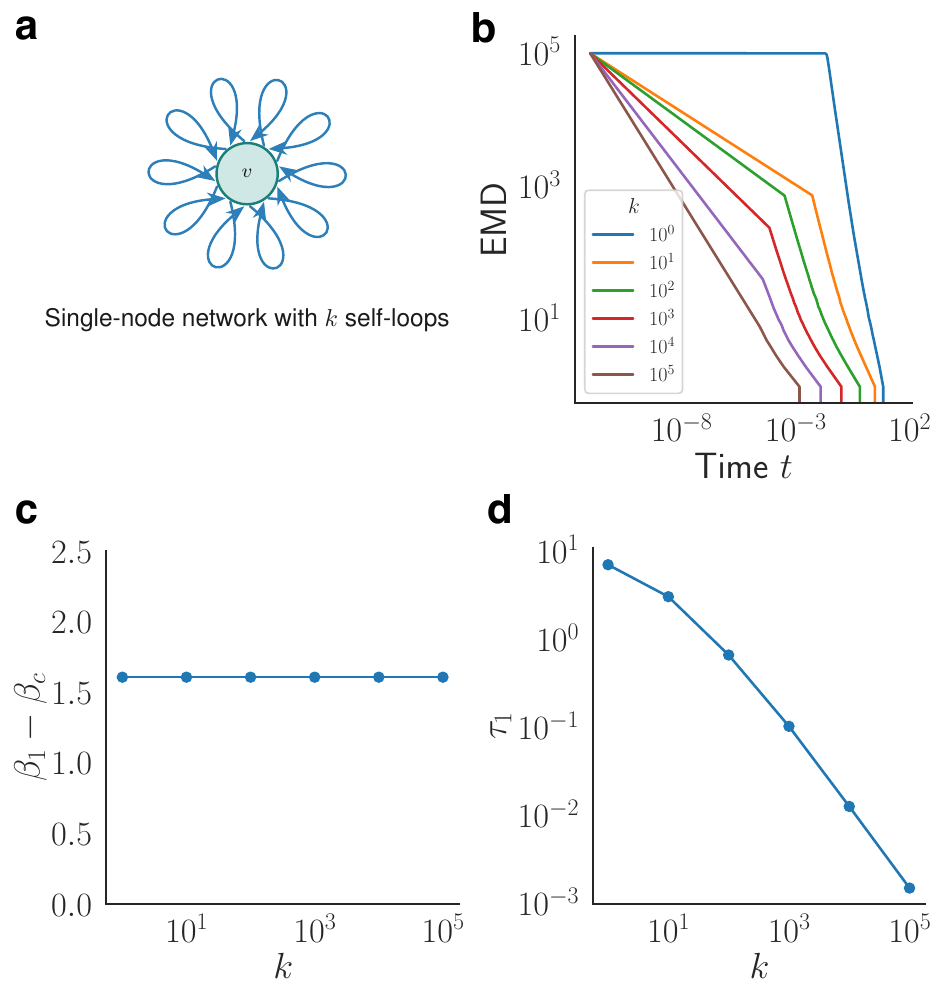}
  \caption{\textbf{Dynamic memory compression distinguishes systems with
  identical relative threshold structure.}
  \textbf{(a)}~Single-node network with $k$ self-loops, illustrating
  increasing feedback multiplicity without topological growth.
  \textbf{(b)}~Effective memory depth $\EMD(t)$ as a function of $t$ for
  different values of $k$; increasing loop multiplicity accelerates memory
  compression.
  \textbf{(c)}~The relative threshold $\beta_1-\beta_c$ is invariant with
  $k$, showing that the separation from criticality required for depth-one
  approximation is unchanged by feedback duplication.
  \textbf{(d)}~Thermalization time $\tau_1$ decreases with multiplicity,
  revealing macroscopic memory-compression dynamics not captured by the
  relative threshold alone.
  Panels~(b) and~(d) use log-log scale; panel~(c) uses log scale on the
  $x$-axis.}
  \label{fig:single-node}
\end{figure}

The distinction between equilibrium memory thresholds and macroscopic
memory-compression dynamics appears already in the simplest possible
system. Consider a single-node directed multigraph with $k$ parallel
self-loops (Fig.~\ref{fig:single-node}). Increasing $k$
increases feedback multiplicity without changing the binary topology.

For this system the adjacency matrix reduces to the scalar $A = [k]$, and
the infinite-memory resolvent is
\[
  R_\beta^{(\infty)}
  = \frac{1}{1-ke^{-\beta}},
  \qquad \beta > \beta_c = \log k.
\]
The finite-memory truncations are
\[
  R_\beta^{(n)}
  = \sum_{m=0}^{n}(ke^{-\beta})^m
  = \frac{1-(ke^{-\beta})^{n+1}}{1-ke^{-\beta}}.
\]
The relative truncation error is therefore
\[
  E_n(\beta)
  = \frac{|R_\beta^{(\infty)}-R_\beta^{(n)}|}{|R_\beta^{(\infty)}|}
  = (ke^{-\beta})^{n+1}
  = e^{-(n+1)(\beta-\beta_c)},
\]
from which the threshold inverse temperature satisfies
\[
  E_n(\beta_n) = \varepsilon
  \;\Longrightarrow\;
  \beta_n - \beta_c = \frac{-\log\varepsilon}{n+1}.
\]
In particular, $\beta_n - \beta_c$ depends only on $\varepsilon$ and $n$,
\emph{not} on $k$. The spectral gap $\beta_1-\beta_c = (-\log\varepsilon)/2$
is therefore invariant under changes in loop multiplicity, confirming the
invariance of the relative threshold illustrated in Fig.~\ref{fig:single-node}(c).

The macroscopic thermalization times do change substantially with $k$. Under
the protocol of Section~\ref{sec:twotimescale}, $\tau_n = \kappa(
e^{(\beta_n-\beta_c)^{1/\alpha}} - 1)$, which decreases as $k$ increases
through $\kappa = N/\bar{d} = 1/k$ (since the single-node network has
$N=1$ and mean out-degree $\bar{d}=k$). Systems with identical relative
equilibrium thresholds therefore exhibit distinct macroscopic organisations
of memory compression.

This example illustrates a general principle: equilibrium thresholds
characterise the intrinsic hierarchy of memory scales, whereas the
thermalization cascade describes how that hierarchy unfolds dynamically.
Functional memory organisation is not determined by static equilibrium
structure alone.

\section{Why the cascade is defined using resolvents rather than normalized propagators}
\label{sec:resolvent_vs_propagator}

The cascade is defined using the unnormalized resolvents $R_t^{(n)}$ rather
than the column-normalized propagators $T_t^{(n)}$. This choice is
essential. The resolvent records the total amplitude of directed walks up to
depth $n$, whereas normalization removes amplitude information by converting
each column into a probability distribution.

For a fixed source node $j$, the normalized propagator is
\[
  T_t^{(n)}(i,j)
  =
  \frac{R_t^{(n)}(i,j)}
       {\sum_{i'}R_t^{(n)}(i',j)}.
\]
Thus two resolvents with different total walk amplitudes but proportional
columns define the same propagator. The propagator retains the relative
distribution of flow targets but discards the magnitude of memory
amplification.

This loss is visible in the single-node example. For $A=[k]$, both the
finite and infinite resolvents are positive scalars, so
\[
  T_\beta^{(n)}=1
  \qquad\text{and}\qquad
  T_\beta^{(\infty)}=1
\]
for every $n$, every $k$, and every admissible $\beta$. If the cascade were
defined through the normalized propagators, every finite truncation would be
indistinguishable from the infinite-memory propagator, and no memory
hierarchy would be detected. 

More generally, normalization can collapse distinct memory amplitudes onto
identical transition probabilities whenever walk contributions remain
proportional after normalization. The propagator therefore describes the
distribution of effective flow, whereas the resolvent retains the
total accumulated contribution of long directed histories before
normalization.

By contrast, the resolvent error
\[
  \frac{\|R_\beta^{(\infty)}-R_\beta^{(n)}\|_F}
       {\|R_\beta^{(\infty)}\|_F}
\]
measures the amplitude of the unresolved Neumann tail: how much long-range
causal history remains before normalization turns the dynamics into a
transition matrix. The normalized propagators are used to visualize and
interpret effective flow, while the resolvents define thermalization because
they preserve the total memory amplitude that is progressively compressed by
the cascade.

\section{Monotonicity and normalization of the thermalization cascade}
\label{sec:monotonicity}

Let $A$ be a non-negative adjacency matrix with spectral radius $\rho(A)$.
We work in the regime $\beta>\beta_c=\log\rho(A)$, so that
$e^{-\beta}\rho(A)<1$ and the Neumann series
\[
  R_\beta^{(\infty)}
  =
  \sum_{k=0}^{\infty}e^{-\beta k}A^k
\]
converges entrywise and in any matrix norm.

\begin{lemma}[Monotonicity in memory depth]
\label{lem:monotone}
For every $\beta>\beta_c$,
\[
  R_\beta^{(\infty)}-R_\beta^{(n+1)}
  \;\leq\;
  R_\beta^{(\infty)}-R_\beta^{(n)}
  \qquad\text{entrywise,}
\]
and consequently $E_{n+1}(\beta)\leq E_n(\beta)$, where
\[
  E_n(\beta)
  =
  \frac{\|R_\beta^{(\infty)}-R_\beta^{(n)}\|_F}
       {\|R_\beta^{(\infty)}\|_F}.
\]
\end{lemma}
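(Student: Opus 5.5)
The plan is to write the tail of the Neumann series explicitly and use entrywise non-negativity. For $\beta>\beta_c$ the series $R_\beta^{(\infty)}=\sum_{k\ge0}e^{-\beta k}A^k$ converges entrywise, because $e^{-\beta}\rho(A)<1$. Subtracting the finite partial sum \eqref{eq:resolvent_finite_methods} then gives
\[
  R_\beta^{(\infty)}-R_\beta^{(n)}=\sum_{k=n+1}^{\infty}e^{-\beta k}A^k ,
\]
and therefore
\[
  \bigl(R_\beta^{(\infty)}-R_\beta^{(n)}\bigr)-\bigl(R_\beta^{(\infty)}-R_\beta^{(n+1)}\bigr)=e^{-\beta(n+1)}A^{n+1}.
\]
Since $A_{ij}\ge0$, every power $A^{n+1}$ has non-negative entries; combinatorially, $(A^{n+1})_{ij}$ is a sum of products of non-negative edge weights over walks. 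The difference is therefore entrywise non-negative, which is the first claim. The same expression also shows that both tails are themselves entrywise non-negative, being limits of sums of non-negative matrices.

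For the Frobenius consequence, I will use that on entrywise non-negative matrices the Frobenius norm is monotone. If $0\le M\le N$ entrywise, then $M_{ij}^2\le N_{ij}^2$ for every entry, and summing gives $\|M\|_F\le\|N\|_F$. Apply this with $M=R_\beta^{(\infty)}-R_\beta^{(n+1)}$ and $N=R_\beta^{(\infty)}-R_\beta^{(n)}$, using that $M\ge0$ from the previous step. Then divide by the common positive denominator $\|R_\beta^{(\infty)}\|_F$, which is at least $\|I\|_F=\sqrt{N}>0$ because $R_\beta^{(\infty)}\ge I$ entrywise. This gives $E_{n+1}(\beta)\le E_n(\beta)$.

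The only point needing any care is justifying the tail identity and the entrywise non-negativity of the infinite tail, which is where the hypothesis $\beta>\beta_c$ enters. It suffices to note that absolute entrywise convergence follows from convergence of $\sum_k\|e^{-\beta k}A^k\|$, which holds because $\|A^k\|^{1/k}\to\rho(A)$ (Gelfand). Limits of non-negative partial sums are then non-negative. Beyond this the argument is elementary, and I expect no real obstacle.
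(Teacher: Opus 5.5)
Your proposal is correct and follows the paper's proof essentially step for step: the tail identity, entrywise non-negativity of $e^{-\beta(n+1)}A^{n+1}$, monotonicity of the Frobenius norm on entrywise non-negative matrices, and division by $\|R_\beta^{(\infty)}\|_F$. Your extra justifications fill in details the paper leaves implicit: absolute convergence via Gelfand's formula, and $\|R_\beta^{(\infty)}\|_F\ge\sqrt{N}>0$ from $R_\beta^{(\infty)}\ge I$. You should, however, rename the comparison matrix in the Frobenius step so that it does not clash with $N$ as the node count.
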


\begin{proof}
Since $A$ is non-negative, the tail sum
\[
  R_\beta^{(\infty)}-R_\beta^{(n)}
  =
  \sum_{k=n+1}^{\infty}e^{-\beta k}A^k
\]
is an entrywise non-negative matrix. Writing
\[
  R_\beta^{(\infty)}-R_\beta^{(n)}
  =
  e^{-\beta(n+1)}A^{n+1}
  +
  \bigl(R_\beta^{(\infty)}-R_\beta^{(n+1)}\bigr),
\]
and noting that $e^{-\beta(n+1)}A^{n+1}\geq 0$ entrywise, we conclude
\[
  R_\beta^{(\infty)}-R_\beta^{(n+1)}
  \;\leq\;
  R_\beta^{(\infty)}-R_\beta^{(n)}
  \qquad\text{entrywise.}
\]
Since the Frobenius norm is monotone on entrywise non-negative matrices,
dividing by $\|R_\beta^{(\infty)}\|_F>0$ gives $E_{n+1}(\beta)\leq
E_n(\beta)$.
\end{proof}

\begin{corollary}[Cascade ordering]
\label{cor:cascade_order}
For every $\varepsilon\in(0,1)$, the thermalization times satisfy
$\tau_0\geq \tau_1\geq \tau_2\geq\cdots$.
\end{corollary}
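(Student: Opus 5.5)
The plan is to reduce the ordering of the $\tau_n$ to the pointwise monotonicity of Lemma~\ref{lem:monotone} by a set-inclusion argument. Write the thermalization sets as
\[
  S_n = \{\, t : E_n(\beta(t)) \le \varepsilon \,\},
\]
so that $\tau_n = \inf S_n$. If $t\in S_n$, then Lemma~\ref{lem:monotone} applied at the fixed inverse temperature $\beta(t)$ gives $E_{n+1}(\beta(t)) \le E_n(\beta(t)) \le \varepsilon$, hence $t\in S_{n+1}$. Thus $S_n\subseteq S_{n+1}$, and taking infima over a larger set yields $\tau_{n+1}\le\tau_n$. Induction on $n$ then gives the full chain $\tau_0\ge\tau_1\ge\tau_2\ge\cdots$.

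Before this, one must check that the lemma applies at every admissible time. Under the protocol~\eqref{eq:protocol_methods} we have $\beta(t)>\beta_c$ for all $t>0$, so the Neumann series converges. We also need $\|R^{(\infty)}_{\beta(t)}\|_F>0$; this holds because $R^{(\infty)}\ge I$ entrywise, since the $k=0$ term is $I$ and every other term is non-negative. At $t=0$ we have $\beta=\beta_c$, where $R^{(\infty)}$ is undefined, so $t=0$ is simply excluded from every $S_n$. This exclusion does not affect the inclusion argument.

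The only subtle point is non-emptiness. If some $S_n$ were empty, its infimum would be $+\infty$ and the ordering would still hold in the extended reals, provided the inclusion $S_n\subseteq S_{n+1}$ is kept. I would nevertheless show that each $S_n$ is non-empty, so that the $\tau_n$ are finite. Since $A\ge0$, the tail satisfies
\[
  \bigl\|R^{(\infty)}_\beta - R^{(n)}_\beta\bigr\|_F \;\le\; \sum_{k>n} e^{-\beta k}\|A\|_F^k ,
\]
which tends to $0$ as $\beta\to\infty$. Meanwhile $\|R^{(\infty)}_\beta\|_F \ge \|I\|_F = \sqrt N$. Hence $E_n(\beta)\to0$ as $\beta\to\infty$. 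Because $\beta(t)\to\infty$ as $t\to\infty$, some $t$ satisfies $E_n(\beta(t))<\varepsilon$, so $S_n\neq\emptyset$.

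Finally, the restriction $\varepsilon\in(0,1)$ enters only for this non-emptiness and for the non-triviality of $\tau_0$. The ordering itself is purely the monotone set inclusion. I expect no real obstacle beyond stating these domain issues cleanly.
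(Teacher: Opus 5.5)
Your proof is correct and is the same argument as the paper's: the pointwise inequality $E_{n+1}\le E_n$ from the monotonicity lemma gives the inclusion $S_n\subseteq S_{n+1}$ of thermalization sets, and taking infima yields $\tau_{n+1}\le\tau_n$. Your additional checks are sound and make explicit domain issues the paper leaves implicit: that $\|R^{(\infty)}\|_F\ge\sqrt{N}>0$, that $t=0$ is excluded, and that each $S_n$ is non-empty so the $\tau_n$ are finite.
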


\begin{proof}
By Lemma~\ref{lem:monotone}, $E_{n+1}(t)\leq E_n(t)$ for all $t$, so
$\{t:E_{n+1}(t)\leq\varepsilon\}\supseteq\{t:E_n(t)\leq\varepsilon\}$.
Taking infima gives $\tau_{n+1}\leq\tau_n$.
\end{proof}

\begin{remark}
Corollary~\ref{cor:cascade_order} holds for any $\varepsilon\in(0,1)$
and is independent of the choice of thermodynamic protocol. The
hierarchical ordering of cascade times is an intrinsic property of the
network structure.
\end{remark}

\begin{proposition}[Normalization of the hierarchical memory spectrum]
\label{prop:hms_norm}
Assume the thermodynamic protocol is continuous, strictly increasing, and
satisfies $\beta(0)=\beta_c$. Then
\[
  \mu_G(l)
  =
  \frac{\tau_{l-1}-\tau_l}{\tau_0},
  \qquad l\geq1,
\]
defines a probability distribution: $\sum_{l\geq1}\mu_G(l)=1$.
\end{proposition}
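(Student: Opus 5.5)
Nonnegativity of each $\mu_G(l)$ follows from Corollary~\ref{cor:cascade_order}, which gives $\tau_{l-1}\ge\tau_l$. Normalization then reduces to the telescoping identity
\[
\sum_{l=1}^{L}\mu_G(l)=\frac{\tau_0-\tau_L}{\tau_0},
\]
together with the limit $\tau_L\to 0$ as $L\to\infty$. Two side facts are needed along the way: $0<\tau_0<\infty$, so that the division makes sense, and every $\tau_l$ is finite.

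\emph{Finiteness of each $\tau_n$.} By the assumptions on the protocol, $\beta$ is a homeomorphism of $[0,\infty)$ onto its image. So $\tau_n=\beta^{-1}(\beta_n)$, where $\beta_n$ is the protocol-independent threshold of equation~\eqref{eq:beta_thresh_methods}. It suffices to show $\beta_c<\beta_n<\infty$, with $\beta_n$ lying in the range of the protocol. For the Methods protocol the range is $[\beta_c,\infty)$; in general I will add this as an implicit assumption. For the upper bound, set $x=e^{-\beta}\rho(A)$ and note that
\[
\|R^{(\infty)}_\beta-R^{(n)}_\beta\|_F\le \sum_{k>n}e^{-\beta k}\|A^k\|_F,
\]
while $\|R^{(\infty)}_\beta\|_F\ge\|I\|_F=\sqrt N$ because all terms are entrywise nonnegative. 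Gelfand's formula gives $\|A^k\|_F\le C_\delta(\rho(A)+\delta)^k$, so the tail goes to $0$ as $\beta\to\infty$, and $E_n(\beta)\le\varepsilon$ holds for all large $\beta$.

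\emph{Positivity of $\tau_0$.} As $\beta\downarrow\beta_c$, the Perron--Frobenius eigenvalue $\rho(A)$ of $A\ge0$ makes $\|R^{(\infty)}_\beta\|_F\to\infty$, while $R^{(0)}=I$ stays bounded. Hence $E_0(\beta)\to1>\varepsilon$ near $\beta_c$, which forces $\beta_0>\beta_c$ and so $\tau_0>0$. This uses $\rho(A)>0$; the degenerate nilpotent case $\rho(A)=0$ (for which $\beta_c=-\infty$) I would exclude.

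\emph{The limit $\tau_L\to0$.} This is the main obstacle. I must show $\beta_L\downarrow\beta_c$, i.e.\ that for every fixed $\beta>\beta_c$ one has $E_L(\beta)\le\varepsilon$ once $L$ is large. That holds because the Neumann tail of a convergent series tends to $0$ at fixed $\beta$. So for any $\eta>0$, taking $\beta=\beta_c+\eta$ gives $\beta_L\le\beta_c+\eta$ eventually. Since the $\beta_L$ are nonincreasing by Lemma~\ref{lem:monotone} and bounded below by $\beta_c$, they converge to $\beta_c$. Continuity of $\beta^{-1}$ at $\beta_c$ together with $\beta(0)=\beta_c$ then gives $\tau_L\to0$. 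Passing to the limit in the telescoping sum yields $\sum_{l\ge1}\mu_G(l)=1$.
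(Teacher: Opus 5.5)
Your proof is correct and follows the paper's route. Telescoping reduces normalization to $\tau_L\to0$, which you obtain as the paper does: $E_L(\beta)\to0$ at each fixed $\beta>\beta_c$ forces $\beta_L\downarrow\beta_c$, and continuity of $\beta^{-1}$ with $\beta^{-1}(\beta_c)=0$ finishes the argument. Your additional checks are genuine side conditions that the paper's proof leaves implicit: nonnegativity of $\mu_G(l)$, $0<\tau_0<\infty$ via the Perron--Frobenius blow-up of $R^{(\infty)}_\beta$ near $\beta_c$, and the requirement that the protocol's range contain the thresholds $\beta_n$.
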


\begin{proof}
Telescoping gives $\sum_{l=1}^{L}\mu_G(l) = (\tau_0-\tau_L)/\tau_0$,
so it suffices to show $\tau_L\to0$ as $L\to\infty$.

For every fixed $\beta>\beta_c$, the Neumann series converges absolutely,
hence $\|R_\beta^{(\infty)}-R_\beta^{(n)}\|_F\to0$ as $n\to\infty$.
Therefore, for every $\varepsilon>0$ and every $\beta>\beta_c$, there
exists $n_\beta$ such that $E_n(\beta)\leq\varepsilon$ for all
$n\geq n_\beta$. It follows that the threshold
\[
  \beta_n(\varepsilon)
  =
  \inf\{\beta>\beta_c:E_n(\beta)\leq\varepsilon\}
\]
satisfies $\limsup_{n\to\infty}\beta_n(\varepsilon)\leq\beta$ for every
$\beta>\beta_c$, and hence $\beta_n(\varepsilon)\to\beta_c$.

Since $\tau_n=\beta^{-1}(\beta_n)$ and $\beta^{-1}$ is continuous with
$\beta^{-1}(\beta_c)=0$ (by the protocol definition), we conclude
$\tau_n\to0$, and therefore $\sum_{l=1}^{L}\mu_G(l)\to1$.
\end{proof}

\section{Two-timescale structure and macroscopic rescaling}
\label{sec:twotimescale}

\subsection*{Two-timescale separation}

The framework operates on two distinct temporal levels.

\medskip\noindent\textbf{Microscopic level.}
At fixed macroscopic time $t$, the propagators $\Tn$ describe probability
flow through directed paths of depth at most $n$, weighted by
$e^{-\beta(t)k}$. These transitions are governed by the network adjacency
structure.

\medskip\noindent\textbf{Macroscopic level.}
The inverse-temperature parameter $\beta(t)$ evolves slowly according to
the thermodynamic protocol
\begin{equation}
  \beta(t) = \beta_c + \left[\log\!\left(1+\frac{t}{\kappa}\right)\right]^\alpha,
  \label{eq:protocol}
\end{equation}
with $\kappa = N/\bar{d}$ and $\alpha = 1 + 1/\log(1+\bar{d})$, where $N$
is the number of nodes and $\bar{d}$ the mean out-degree. The protocol is
strictly increasing with $\beta(0) = \beta_c$ and $\beta(t) \to \infty$ as
$t \to \infty$.

\medskip\noindent\textbf{Protocol inversion.}
Solving~\eqref{eq:protocol} for $t$:
\begin{equation}
  \beta^{-1}(\beta)
  = \kappa\!\left(e^{(\beta-\beta_c)^{1/\alpha}} - 1\right).
  \label{eq:protocol_inv}
\end{equation}
The thermalization times are therefore
\begin{equation}
  \tau_n = \kappa\!\left(e^{(\beta_n - \beta_c)^{1/\alpha}} - 1\right),
  \label{eq:tau_explicit}
\end{equation}
where $\beta_n(\varepsilon) = \inf\{\beta > \beta_c : E_n(\beta) \leq
\varepsilon\}$ depends only on $A$ and $\varepsilon$.

\subsection*{Invariance under macroscopic time rescaling}

\begin{proposition}[$\kappa$-invariance of normalized cascade observables]
\label{prop:invariance}
The non-Markovian ratio $\NMR = \tau_1/\tau_0$ and the hierarchical memory
spectrum $\{\mu_G(l)\}_{l \geq 1}$ are independent of the
sparsity-rescaling parameter $\kappa = N/\bar{d}$. Both depend only on the
adjacency matrix $A$ (through the intrinsic thresholds $\{\beta_n\}$ and
the degree-adapted exponent $\alpha$) and the tolerance $\varepsilon$.
\end{proposition}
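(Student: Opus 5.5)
The plan is to substitute the explicit formula \eqref{eq:tau_explicit} into each observable and observe that $\kappa$ appears only as a common multiplicative prefactor, which cancels in every ratio. Write
\[
  \tau_n = \kappa\, g_n, \qquad g_n := e^{(\beta_n-\beta_c)^{1/\alpha}} - 1 .
\]
Here $g_n$ depends only on $\beta_n(\varepsilon)$, on $\beta_c=\log\rho(A)$ and on $\alpha = 1+1/\log(1+\bar d)$. The threshold $\beta_n$ is defined intrinsically through \eqref{eq:beta_thresh_methods}, so it depends on $A$ and $\varepsilon$ alone and not on the protocol. The quantity $\bar d$ is itself a function of $A$, namely the mean out-degree. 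Consequently each $g_n$ is a function of $(A,\varepsilon)$ only.

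The first step is to record that $\beta_n$ is protocol-independent; this is immediate from its definition. The second step is to handle the denominators. We need $\tau_0>0$, which is equivalent to $g_0>0$ and hence to $\beta_0>\beta_c$.

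To see that $\beta_0>\beta_c$, I will argue that $E_0(\beta)\to 1$ as $\beta\downarrow\beta_c$, so that $E_0(\beta)>\varepsilon$ near $\beta_c$ for any $\varepsilon<1$. When $A$ has a Perron eigenvalue $\rho(A)>0$, $\|R^{(\infty)}_\beta\|_F$ blows up as $\beta\downarrow\beta_c$ while $\|R^{(0)}_\beta\|_F=\|I\|_F$ stays bounded. This gives the limit.

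This positivity (and, for the spectrum, well-definedness of the $\mu_G(l)$) is the only genuine obstacle. It needs a nondegeneracy assumption such as $\rho(A)>0$; for instance, nilpotent $A$ has $\beta_c=-\infty$ and the protocol itself breaks down. I would state this standing assumption explicitly, noting it is implicit whenever $\NMR$ is defined.

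With positivity in hand, the computation is direct:
\[
  \NMR = \frac{\tau_1}{\tau_0} = \frac{\kappa g_1}{\kappa g_0} = \frac{g_1}{g_0},
  \qquad
  \mu_G(l) = \frac{\kappa(g_{l-1}-g_l)}{\kappa g_0} = \frac{g_{l-1}-g_l}{g_0}.
\]
Both right-hand sides are free of $\kappa$. Taken together with Proposition~\ref{prop:hms_norm}, the rescaled spectrum remains a probability distribution.

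Finally I will remark on the scope of the result. The argument uses only that the protocol has the form $\beta(t)=\beta_c+f(t/\kappa)$ with $f$ fixed, since then $\tau_n=\kappa f^{-1}(\beta_n-\beta_c)$. So the invariance holds for any time-rescaled protocol, while the dependence on $\alpha$ persists because $\alpha$ enters $f$ nonlinearly.
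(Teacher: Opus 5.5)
Your proposal is correct and follows the paper's proof: you substitute $\tau_n=\kappa\bigl(e^{(\beta_n-\beta_c)^{1/\alpha}}-1\bigr)$ and cancel the common factor $\kappa$ in $\tau_1/\tau_0$ and in $(\tau_{l-1}-\tau_l)/\tau_0$. Your two additions are sound and are left implicit in the paper: the check that $\tau_0>0$, via the Perron blow-up of $\|R_\beta^{(\infty)}\|_F$ as $\beta\downarrow\beta_c$ when $\rho(A)>0$, and the remark that any protocol of the form $\beta_c+f(t/\kappa)$ gives the same cancellation.
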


\begin{proof}
From~\eqref{eq:tau_explicit},
\[
  \NMR
  = \frac{\tau_1}{\tau_0}
  = \frac{\kappa\bigl(e^{(\beta_1-\beta_c)^{1/\alpha}}-1\bigr)}
         {\kappa\bigl(e^{(\beta_0-\beta_c)^{1/\alpha}}-1\bigr)}
  = \frac{e^{(\beta_1-\beta_c)^{1/\alpha}}-1}
         {e^{(\beta_0-\beta_c)^{1/\alpha}}-1}.
\]
The factor $\kappa$ cancels exactly. Since $\beta_n$ depends only on
$(A,\varepsilon)$ and $\alpha$ depends only on $\bar{d}$ (determined by
$A$), the NMR is a function of $(A,\varepsilon)$ alone. For the HMS,
\[
  \mu_G(l)
  = \frac{\tau_{l-1}-\tau_l}{\tau_0}
  = \frac{\kappa\bigl(e^{(\beta_{l-1}-\beta_c)^{1/\alpha}}-e^{(\beta_l-\beta_c)^{1/\alpha}}\bigr)}
         {\kappa\bigl(e^{(\beta_0-\beta_c)^{1/\alpha}}-1\bigr)}.
\]
Again $\kappa$ cancels, so every spectral mass $\mu_G(l)$ depends only on
$(A,\varepsilon)$. \qed
\end{proof}

\begin{remark}[Role of $\alpha$ and cross-network comparison]
\label{rem:alpha}
Proposition~\ref{prop:invariance} establishes $\kappa$-invariance:
rescaling the overall macroscopic clock leaves NMR and HMS unchanged, so
networks of different sizes and densities are compared on genuinely
comparable thermodynamic footings. The exponent
$\alpha = 1 + 1/\log(1+\bar{d})$ does not cancel; it varies from
$\alpha\approx2.44$ for sparse networks ($\bar{d}\approx 1$) toward
$\alpha\to 1$ for dense networks ($\bar{d}\gg1$). The NMR therefore
carries a mild network-specific protocol dependence through $\alpha$.

This is a design feature rather than a defect: the $\alpha$ parameter
stretches the macroscopic clock in a degree-adapted manner, giving sparse
networks proportionally more time to explore deep cascade stages. For the
dataset analysed, the empirical variability of NMR across networks greatly
exceeds any variation attributable to $\alpha$ alone, as confirmed by the
sensitivity analysis in Supplementary Section~\ref{sec:eps_stability}.
\end{remark}

\begin{remark}[Ordering invariance under protocol reparameterization]
\label{rem:order_inv}
A stronger invariance holds for the qualitative cascade ordering. Let
$\tilde\beta(t)$ be any strictly increasing protocol with
$\tilde\beta(0)=\beta_c$. Then $\tilde\tau_n=\tilde\beta^{-1}(\beta_n)$
and $\beta_n<\beta_m \Leftrightarrow \tilde\tau_n<\tilde\tau_m$, since
$\tilde\beta^{-1}$ is strictly increasing. The ordering of thermalization
times — and hence the HMS support and the set of resolved cascade depths —
is therefore invariant under any strictly monotone reparameterization of
macroscopic time. The \emph{values} of NMR and HMS depend on the specific
protocol through $\alpha$, but their ordinal structure does not.
\end{remark}

\section{Weighted transport geometry}
\label{sec:geometry}

Edge weights $w_{j\to i}\in\mathbb{R}_{>0}$ encode the strength of
directed interactions and are treated as continuous positive reals
throughout this work (the integer-multiplicity interpretation is a
special case).  The weighted adjacency matrix has entries
$A_{ij}=w_{j\to i}$ when a directed edge $j\to i$ exists and $A_{ij}=0$
otherwise.

The matrix power $(A^n)_{ij}$ aggregates weighted directed walks of
length $n$ from $j$ to $i$, with each walk contributing the product of
its edge weights.  The infinite-memory transport operator
\begin{equation}
  R_\beta^{(\infty)}
  = \sum_{n=0}^{\infty}e^{-\beta n}A^n
  = (I - e^{-\beta}A)^{-1},
  \label{eq:Rbeta}
\end{equation}
is well-defined for $\beta$ exceeding the spectral radius of $A$ and
aggregates weighted transport contributions across all depths with
exponential suppression of long-range flow.  Changing edge weights
modifies path amplitudes, effective transport distances, and the
curvature geometry of directed flow.

\paragraph*{Forman--Ricci curvature for directed weighted networks.}

The Forman--Ricci curvature of a directed edge $e = (j\to i)$ in a
weighted directed graph is~\cite{Sreejith2016,Weber2017,Saucan2017}
\begin{equation}
  \mathrm{Ric}(j\to i)
  = \frac{w_j}{w_{j\to i}}
    + \frac{w_i}{w_{j\to i}}
    - \sum_{\substack{k\to j \\ k\neq i}}
        \frac{w_j}{\sqrt{w_{j\to i}\,w_{k\to j}}}
    - \sum_{\substack{i\to l \\ l\neq j}}
        \frac{w_i}{\sqrt{w_{j\to i}\,w_{i\to l}}},
  \label{eq:forman_ricci}
\end{equation}
where $w_{j\to i}=A_{ij}$ is the weight of the edge under
consideration, and $w_v$ is a positive weight assigned to node $v$
(set to unity if no node weights are available).

Three structural features of Eq.~\eqref{eq:forman_ricci} deserve
emphasis.  First, the sum over $k\to j$ runs over all edges
\emph{incoming to the tail} $j$ (excluding the reverse edge $i\to j$
if present); the sum over $i\to l$ runs over all edges \emph{outgoing
from the head} $i$ (excluding $j\to i$).  This direction-compatible
neighbor selection is the essential modification from the undirected
case~\cite{Sreejith2016}: only neighbors whose edge direction is
consistent with a continued directed path through $e$ contribute to
the curvature.  Second, each neighbor term carries the node weight of
the shared endpoint in the numerator, so curvature is sensitive to the
relative importance of nodes as well as the relative strength of edges.
Third, the formula contains no cross-term between the tail neighborhood
and the head neighborhood; such a term appears in Forman's original
CW-complex formula~\cite{Forman2003} but vanishes for one-dimensional
cell complexes (graphs).

In the unweighted combinatorial case ($w_v=1$ for all $v$, $w_e=1$ for
all $e$), Eq.~\eqref{eq:forman_ricci} reduces to
\begin{equation}
  \mathrm{Ric}(j\to i)
  = 2 - \deg^{\mathrm{in}}(j) - \deg^{\mathrm{out}}(i),
  \label{eq:forman_combinatorial}
\end{equation}
where $\deg^{\mathrm{in}}(j)$ is the in-degree of the tail and
$\deg^{\mathrm{out}}(i)$ is the out-degree of the head.  An edge
connecting a high-in-degree tail to a high-out-degree head therefore
has strongly negative curvature, signaling a bottleneck or hub-to-hub
connection; edges at the periphery of a sparse network have curvature
near $+2$.

\paragraph*{Effect of weight randomization.}

Replacing each $w_{j\to i} = A_{ij}$ with an independent draw
$\tilde{w}_{j\to i}$ from a distribution supported on $\mathbb{R}_{>0}$
with mean $A_{ij}$ (for example, a Gamma distribution matched to the
mean and variance of the empirical weight distribution, or a
zero-truncated log-normal) modifies every curvature term in
Eq.~\eqref{eq:forman_ricci} that involves this edge:
\begin{itemize}
  \item the leading $w_j/w_{j\to i}$ and $w_i/w_{j\to i}$ terms
        (edge weight in the denominator),
  \item every neighbor term $1/\sqrt{w_{j\to i}\,w_{k\to j}}$ or
        $1/\sqrt{w_{j\to i}\,w_{i\to l}}$ that shares the edge
        $j\to i$ (edge weight appears under the square root).
\end{itemize}
Because adjacent edges share endpoint nodes, randomizing a single edge
perturbs the curvature of all neighboring edges as well.  The GFM
(graph-filtered model) therefore simultaneously randomizes both local
curvature and its spatial correlations while preserving the binary
skeleton (which edges exist) and hence directed reachability and
unweighted path structure.

Deviations of the empirical curvature distribution from that of the
GFM ensemble isolate the contribution of \emph{structured weighted
transport geometry} — systematic covariation of weights along directed
pathways — to functional network organisation.  In particular, directed
paths along which edge weights are positively correlated (a
``directed highway'') exhibit systematically more positive curvature
than expected under the GFM, while bottleneck edges embedded between
dense neighborhoods exhibit more negative curvature.  These signatures
are destroyed by independent weight randomization, making the GFM
comparison a sensitive probe of coordinated weight structure.

\section{Wasserstein geometry of hierarchical memory spectra}
\label{sec:wasserstein}

The hierarchical memory spectrum $\mu_G(l) = (\tau_{l-1}-\tau_l)/\tau_0$
defines a probability distribution over ordered memory depths. Because
memory scales possess a natural hierarchical order, comparing spectra
requires a metric that respects this structure.

For probability distributions $\mu$ and $\nu$ on $\{1,\ldots,l_{\max}\}$,
the 1-Wasserstein distance is
\begin{equation}
  W_1(\mu,\nu)
  = \sum_{l=1}^{l_{\max}}
    \left|\sum_{k=1}^{l}\mu(k) - \sum_{k=1}^{l}\nu(k)\right|,
  \label{eq:wasserstein}
\end{equation}
equal to the $L^1$ distance between cumulative distribution functions. This
admits the optimal-transport interpretation: $W_1(\mu,\nu)$ is the minimal
cost of redistributing unit mass from $\mu$ to $\nu$ when moving mass
between depths $i$ and $j$ costs $|i-j|$.

The support $l_{\max}$ is set per-network to the deepest resolved cascade
level (the largest $n$ for which $\tau_n>0$ at the given $\varepsilon$).
For $l>l_{\max}$, both $\mu(l)$ and $\nu(l)$ are zero, so the cumulative
functions agree and the additional terms in~\eqref{eq:wasserstein} vanish.
The distance is therefore insensitive to the precise choice of $l_{\max}$
beyond the true cascade depth.

\begin{proposition}[Wasserstein vs.\ $L^2$ comparison]
\label{prop:w1_l2}
If $\mu=\delta_{l_1}$ and $\nu=\delta_{l_2}$ are point masses at depths
$l_1\neq l_2$, then $W_1(\mu,\nu)=|l_1-l_2|$ whereas
$\|\mu-\nu\|_2=\sqrt{2}$ regardless of $|l_1-l_2|$. The Wasserstein
metric is therefore sensitive to the spatial displacement of memory mass
along the depth axis, whereas the $L^2$ norm is not.
\end{proposition}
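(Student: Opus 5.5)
The plan is a direct computation from the cumulative-distribution formula \eqref{eq:wasserstein}, followed by an elementary evaluation of the $L^2$ norm. Without loss of generality take $l_1<l_2$; the case $l_1>l_2$ follows by symmetry, since both $W_1$ and the $L^2$ norm are symmetric in $(\mu,\nu)$. Set $l_{\max}\ge l_2$. By the remark following \eqref{eq:wasserstein}, $W_1$ does not depend on the choice of $l_{\max}$ beyond the support, so this is harmless.

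\textbf{Cumulative functions.} Write $F_\mu(l)=\sum_{k=1}^{l}\mu(k)$ and $F_\nu(l)=\sum_{k=1}^{l}\nu(k)$. For point masses these are step functions:
\[
F_\mu(l)=\mathbf{1}[l\ge l_1],\qquad F_\nu(l)=\mathbf{1}[l\ge l_2].
\]
Their difference $|F_\mu(l)-F_\nu(l)|$ equals $1$ exactly when $l_1\le l<l_2$, and equals $0$ otherwise. Summing over $l=1,\dots,l_{\max}$ counts the integers in $[l_1,l_2)$, which gives $W_1(\mu,\nu)=l_2-l_1=|l_1-l_2|$. As a consistency check, this agrees with the optimal-transport reading: the only coupling moves one unit of mass a distance $|l_1-l_2|$.

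\textbf{$L^2$ norm.} The vector $\mu-\nu$ has entry $+1$ at $l_1$, entry $-1$ at $l_2$, and zeros elsewhere, because $l_1\ne l_2$. Hence $\|\mu-\nu\|_2=\sqrt{1+1}=\sqrt2$. This value does not involve $l_1$ or $l_2$ beyond the condition that they differ.

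\textbf{Conclusion and main difficulty.} Moving $l_2$ away from $l_1$ increases $W_1$ linearly, while the $L^2$ distance stays fixed at $\sqrt2$. This is the claimed sensitivity to displacement along the depth axis. There is no genuine obstacle in the argument. The one point needing care is the boundary convention in the cumulative sum: $F_\mu(l)$ includes $k=l$, so the interval where the cumulative functions differ is half-open, $[l_1,l_2)$, and it contains exactly $l_2-l_1$ integers, not $l_2-l_1+1$.
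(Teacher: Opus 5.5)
Your proof is correct and matches the paper's argument: the cumulative functions differ by exactly $1$ on the integers $\min(l_1,l_2)\le l\le\max(l_1,l_2)-1$ (your half-open interval $[l_1,l_2)$), which gives $W_1=|l_1-l_2|$, and $\mu-\nu$ has entries $\pm1$ at the two depths, which gives $\|\mu-\nu\|_2=\sqrt{2}$. Your remarks on the choice of $l_{\max}$ and on the unique coupling are harmless additions to the same computation.
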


\begin{proof}
For $\mu=\delta_{l_1}$ and $\nu=\delta_{l_2}$, the cumulative functions
differ by $1$ on $[\min(l_1,l_2),\max(l_1,l_2)-1]$ and agree elsewhere,
giving $W_1=|l_1-l_2|$. The $L^2$ norm equals
$\sqrt{(\mu(l_1)-\nu(l_1))^2+(\mu(l_2)-\nu(l_2))^2}=\sqrt{2}$ for all
$l_1\neq l_2$.
\end{proof}

\section{Null-model comparisons and polarity response trajectories}
\label{sec:nmr_nullmodels}

\paragraph{Null-model NMR ratios.}
For each of the thirty-four corpus networks, the non-Markovian ratio of
the real network was compared against bootstrap null ensembles for all
three structural null models: the Geometry Fluctuation Model (GFM), the
Directed Configuration Model (DCM), and the Polarity Noise Model at
$p=0.5$ (PNM).  Figure~\ref{fig:nmr_overview}b shows the ratio
$\NMR_{\mathrm{model}}/\NMR_{\mathrm{real}}$ for all thirty-four networks
and all three models.  Each point is the bootstrap mean over $n=100$
randomisations (GFM and DCM) or $n=20$ (PNM); error bars are $95\%$
bootstrap confidence intervals.

The three null models reveal qualitatively distinct structural roles.
For the GFM, the ratio lies strictly below $1.0$ in thirty-three of the
thirty-four networks, and the $95\%$ confidence interval excludes $1.0$ in
thirty-two of them, establishing that weighted transport geometry
systematically deepens functional memory across every domain studied.
The one marginal exception (Chesapeake Bay wet-season, ratio
$1.039 \pm \mathrm{CI}$) reflects model-estimated interaction weights as
discussed in the main text.  The GFM panel therefore displays the
strongest and most consistent signal of the three null models: ratios span
$0.704$--$1.039$ with a median of $0.880$, and no network lies
significantly above $1.0$.

For the DCM, ratios scatter bidirectionally around $1.0$ ($18$ networks
with $\NMR_{\mathrm{real}} > \NMR_{\mathrm{DCM}}$, $15$ with
$\NMR_{\mathrm{real}} < \NMR_{\mathrm{DCM}}$, $1$ overlap), confirming
that mesoscale wiring organisation can either facilitate or suppress deep
memory relative to the degree-sequence expectation, with no universal
direction.  For PNM at $p=0.5$, the split is similarly bidirectional
($15$ above, $14$ below, $5$ overlap), indicating that partial polarity
randomisation neither universally deepens nor universally suppresses
memory across the corpus.

\paragraph{Polarity trajectory overview.}
Figure~\ref{fig:nmr_overview}a shows the normalised polarity trajectories
$\NMR_{\mathrm{null}}(p)/\NMR_{\mathrm{real}}$ for all thirty-four
networks, grouped by dynamical organisation.  Each trajectory begins at
$1.0$ by construction ($p=0$, no polarity perturbation) and is evaluated
at $p \in \{0.01, 0.1, 0.2, 0.3, 0.5, 0.6, 0.75, 0.9, 1.0\}$.

Trajectories within the Diffuse organisation form a tight cluster with
minimal departure from $1.0$ throughout the polarity sweep, reflecting the
weak and sign-heterogeneous directional sensitivity characteristic of
this class.  The Reactive organisation shows moderate spread with both
upward and downward excursions, consistent with mixed but non-negligible
directional responses.  The Stable organisation is also broadly clustered
near $1.0$, but includes one prominent outlier: the JDK software dependency
graph reaches a normalised ratio of $1.70$ at $p=0.5$ before returning
toward $1.0$ at $p=1.0$, making it the strongest polarity responder in
the entire dataset.  The Pulse organisation displays the clearest
divergence: US air traffic is effectively flat (dynamically inert under
polarity perturbation), whereas plant Photosystem~I shows a pronounced
decrease reaching a ratio of $0.81$ at $p=0.5$, reflecting extreme
sensitivity of its single-mode cascade to directional disruption.  The
near-recovery of all trajectories toward $1.0$ at $p=1.0$ (full reversal)
confirms that polarity perturbation is approximately symmetric around the
halfway point for most networks.

\paragraph{Per-network polarity trajectories.}
Figure~\ref{fig:nmr_trajectories} displays the full NMR$(p)$ trajectory
for each of the thirty-four corpus networks individually.  Each panel
shows the absolute NMR as a function of polarity $p$, with shaded $95\%$
bootstrap confidence bands ($n=20$ resamples per polarity level).  Two
horizontal dashed reference lines indicate the DCM and PNM null-model NMR
means for each network, providing a per-network scale reference for the
magnitude of the polarity response relative to the structural null baselines.
Networks are coloured by domain.  The panels collectively show that polarity
response shapes are network-specific and span a wide range of trajectory
forms, from the approximately invariant flat trajectories of US air traffic
and \textit{E.~coli} to the strongly peaked trajectories of JDK
dependencies and bitcoin trust, and the strongly concave trajectory of
plant Photosystem~I.

\begin{figure*}[!h]
  \centering
  \includegraphics[width=\textwidth]{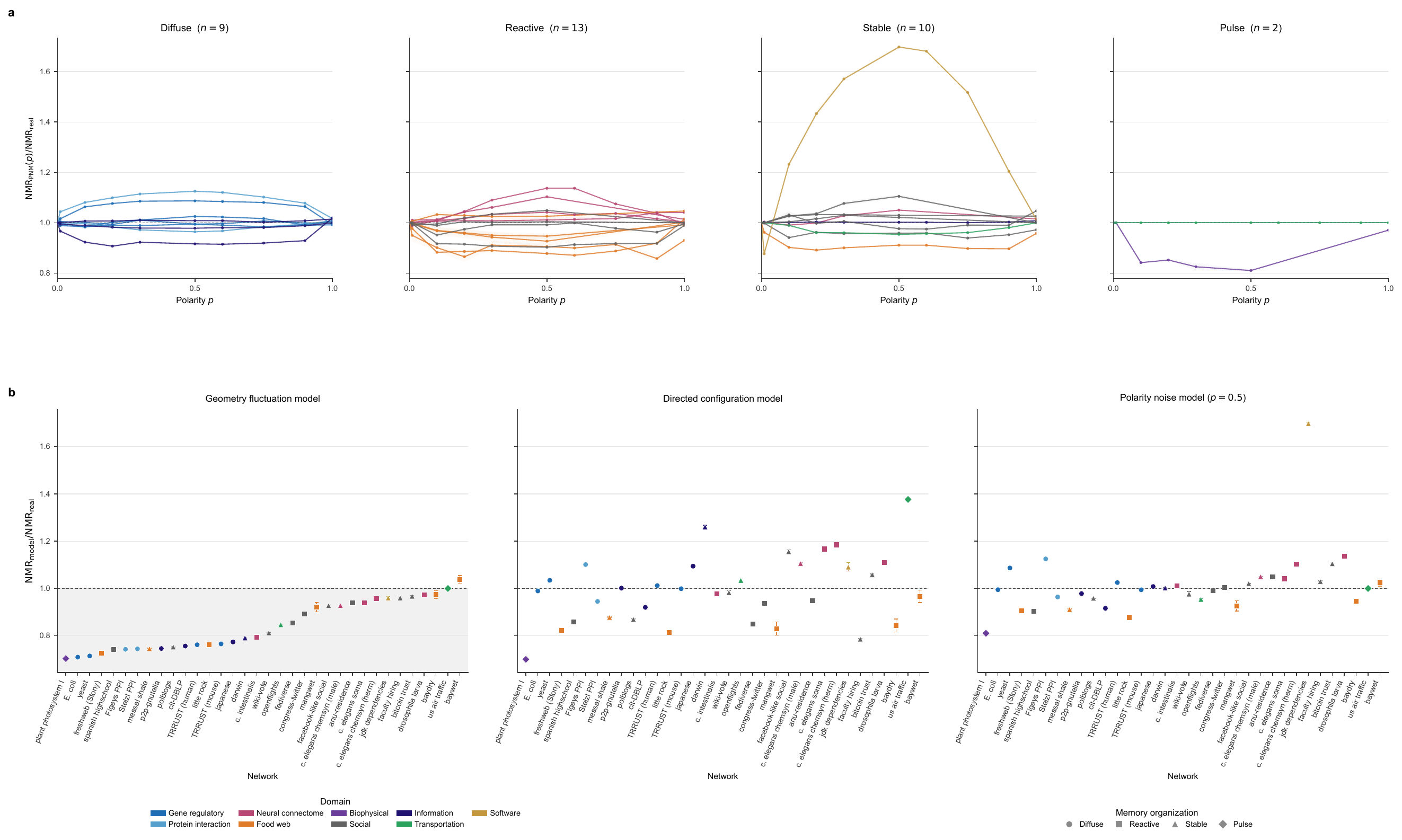}
  \caption{
    \textbf{Null-model NMR comparisons and polarity trajectory overview
    for all thirty-four corpus networks.}
    \textbf{a}, Normalised polarity trajectories grouped by dynamical
    organisation.  Each line shows $\NMR_{\mathrm{null}}(p)/\NMR_{\mathrm{real}}$
    as a function of polarity perturbation level $p$ for one
    network, colour-coded by domain.  All trajectories are anchored at
    $1.0$ at $p=0$ (no perturbation) by construction.  The four panels
    correspond to the Diffuse ($n=9$), Reactive ($n=13$), Stable ($n=10$),
    and Pulse ($n=2$) dynamical organisations identified in the main
    analysis.
    \textbf{b}, Null-model NMR ratios $\NMR_{\mathrm{model}}/\NMR_{\mathrm{real}}$
    for all thirty-four networks under each of the three null models:
    Geometry Fluctuation Model (GFM, left), Directed Configuration Model
    (DCM, centre), and Polarity Noise Model at $p=0.5$ (PNM, right).
    Networks are ordered along the horizontal axis by ascending ratio
    within each panel.  Points show bootstrap means; error bars indicate
    $95\%$ confidence intervals ($n=100$ resamples for GFM and DCM;
    $n=20$ for PNM).  The dashed horizontal line marks the ratio of $1.0$;
    ratios below $1.0$ (shaded region) indicate that the real network
    sustains deeper memory than the null ensemble.  Marker shape indicates
    dynamical organisation; marker colour indicates domain.  The GFM panel
    shows a near-universal signal: thirty-two of thirty-four networks lie
    significantly below $1.0$, with one marginal exception (Chesapeake Bay
    wet-season food web) and one exactly at $1.0$ (US air traffic, Pulse).
    The DCM and PNM panels show bidirectional scatter around $1.0$,
    confirming that mesoscale wiring and directionality lack a universal
    directional effect on memory depth.
  }
  \label{fig:nmr_overview}
\end{figure*}

\begin{figure*}[!h]
  \centering
  \includegraphics[width=\textwidth]{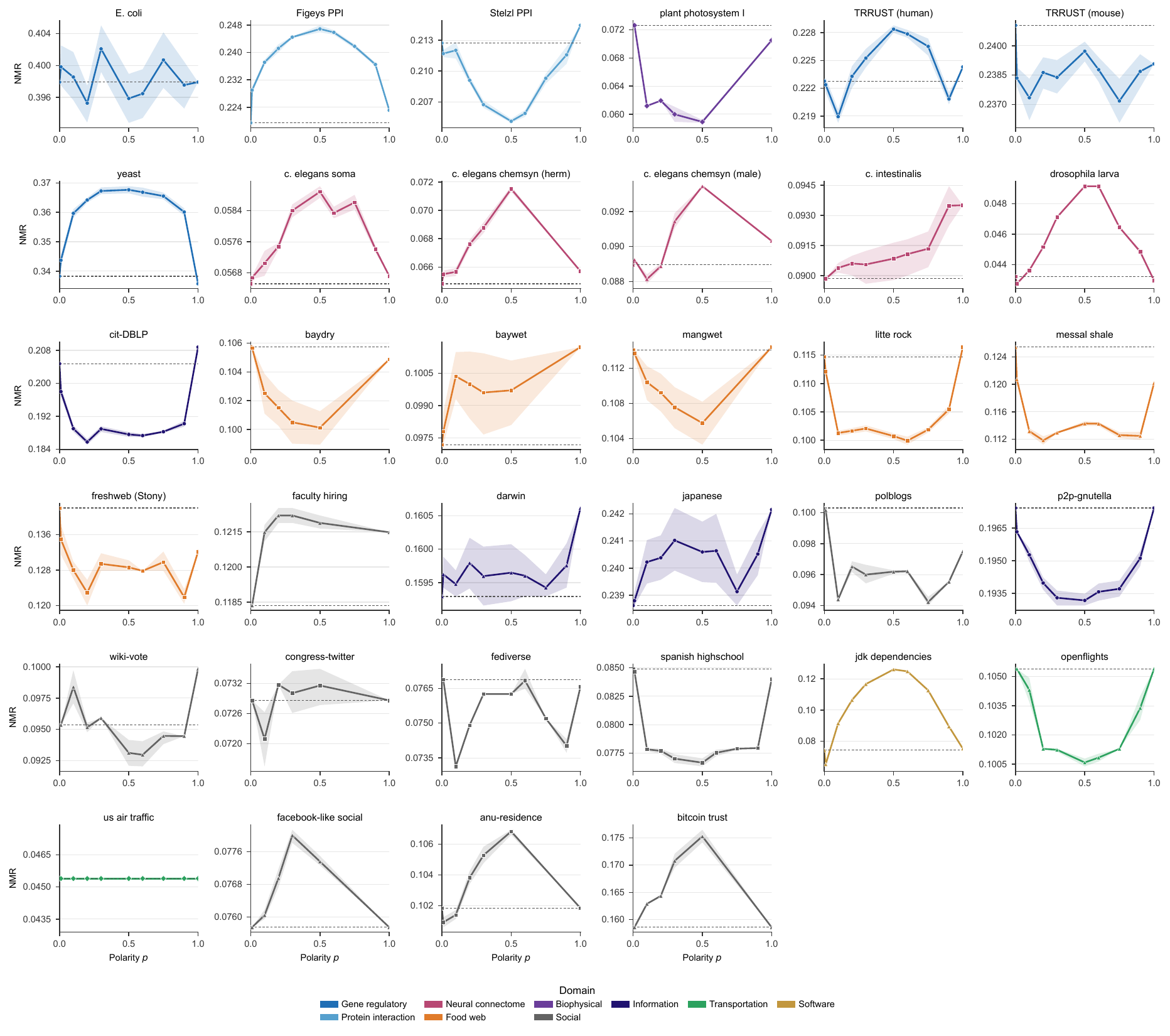}
  \caption{
    \textbf{Per-network polarity response trajectories for all
    thirty-four corpus networks.}
    Each panel shows $\NMR(p)$ as a function of the polarity perturbation
    level $p$ for one network.  Solid lines connect bootstrap means;
    shaded bands show $95\%$ confidence intervals ($n=20$ resamples per
    polarity level). Horizontal dashed lines indicate the DCM null-model mean NMR
    (lower) and the PNM null-model mean NMR at $p=0.5$ (upper) for
    each network, providing structural reference baselines on the
    per-network scale.  Marker colour denotes domain.  The panels are
    arranged alphabetically.  Trajectory shapes
    range from effectively flat (US air traffic, Meso/Pulse) through
    weakly modulated (most Diffuse networks) to strongly peaked (JDK
    dependencies, Stable/Dir, ratio $1.70$ at $p=0.5$) and strongly
    concave (plant Photosystem~I, Geo/Pulse, ratio $0.81$ at $p=0.5$).
  }
  \label{fig:nmr_trajectories}
\end{figure*}

\section{Tolerance stability of the hierarchical memory organisation}
\label{sec:eps_stability}

\paragraph{What $\varepsilon$ controls.}
Every observable used in the main analysis (the hierarchical memory
spectrum (HMS), the non-Markovian ratio, Wasserstein null-model deviations,
and compression profiles) is derived from the thermalization cascade,
whose individual steps are certified by a convergence threshold~$\varepsilon$
(Methods, Eq.~\eqref{eq:tau_methods}).  We assessed whether the cascade is
locally stable around the reference value $\varepsilon_{\mathrm{ref}} =
10^{-5}$ by comparing the HMS at the two adjacent tolerances $\varepsilon
\in \{10^{-6},\,10^{-4}\}$ against the reference spectrum.

\paragraph{Metric.}
For each network $i$ and each local tolerance $\varepsilon$, we compute
the depth-averaged Wasserstein displacement
\[
\langle W_1(n)\rangle_n
\;=\;
\frac{1}{n_{\max}}\sum_{n=1}^{n_{\max}}
W_1\!\bigl(\mu_{\varepsilon}^{(n)},\,\mu_{10^{-5}}^{(n)}\bigr),
\]
where $\mu_\varepsilon^{(n)}$ is the HMS at tolerance~$\varepsilon$ truncated
at memory depth~$n$.  The local instability score for each network is then
\[
\delta_i
\;=\;
\mathrm{median}_{\varepsilon\in\{10^{-6},\,10^{-4}\}}
\langle W_1(n)\rangle_n,
\]
the median displacement across the two local perturbation directions.
Small $\delta_i$ confirms that the cascade is insensitive to whether the
convergence threshold is tightened or relaxed by one order of magnitude.
All thirty-four corpus networks produced valid HMS data at both local
tolerances.

\begin{figure*}[!h]
  \centering
  \includegraphics[width=\textwidth]{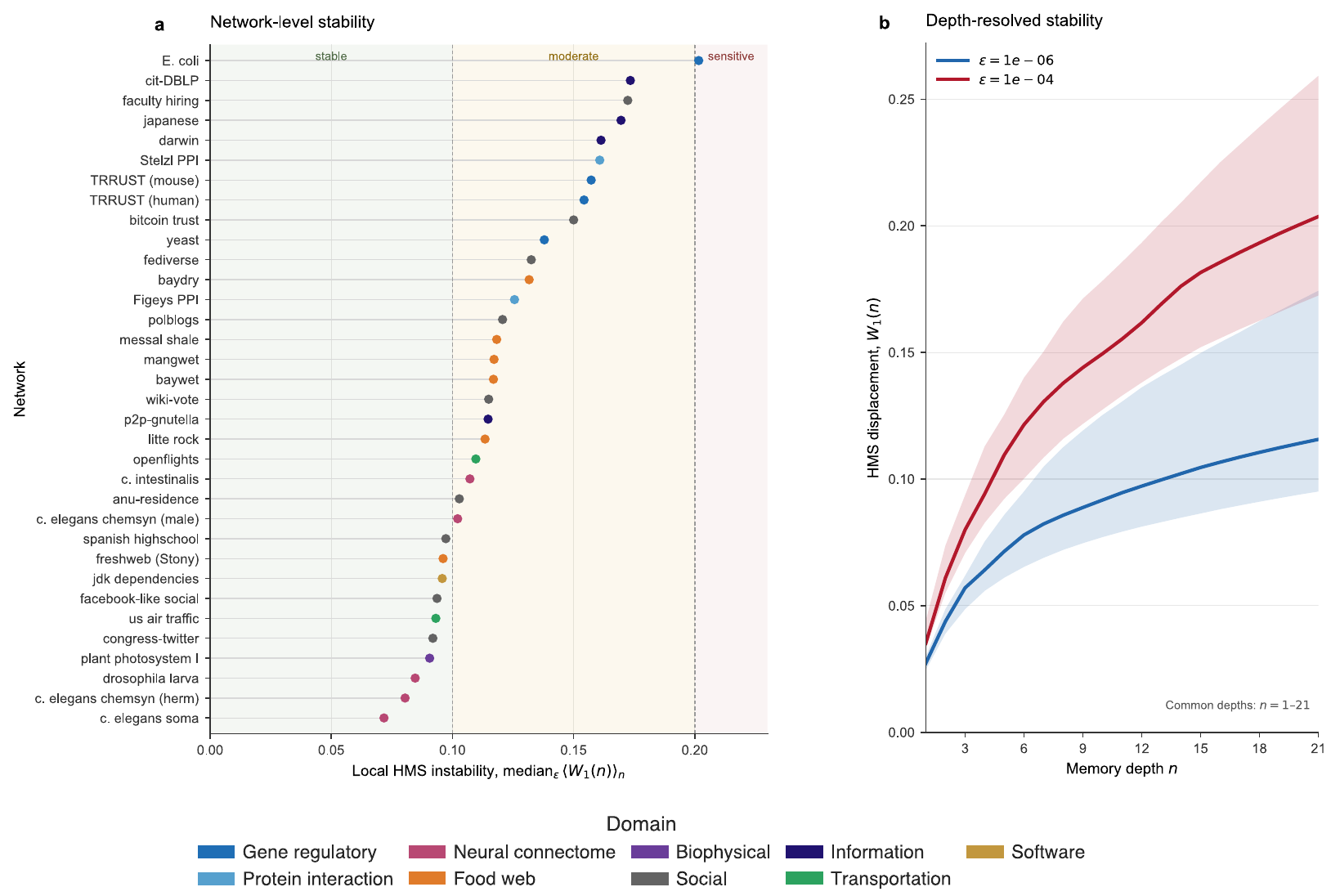}
  \caption{
    \textbf{Local numerical stability of the hierarchical memory cascade
    around the reference tolerance $\varepsilon_{\mathrm{ref}} = 10^{-5}$.}
    \textbf{a}, Ranked network-level instability scores.  For each network,
    the local instability $\delta_i = \mathrm{median}_\varepsilon\langle
    W_1(n)\rangle_n$ is the median depth-averaged 1-Wasserstein displacement
    between the hierarchical memory spectrum at tolerance $\varepsilon$ and
    the reference spectrum at $\varepsilon = 10^{-5}$, taken over the two
    adjacent tolerances $\varepsilon \in \{10^{-6}, 10^{-4}\}$.  Vertical
    dashed lines mark the locally stable ($\delta_i < 0.10$), moderately
    stable ($0.10 \leq \delta_i < 0.20$), and locally sensitive ($\delta_i
    \geq 0.20$) zones.  All thirty-four networks produce valid cascades at
    both local tolerances.  Marker colour denotes domain.
    \textbf{b}, Depth-resolved displacement profiles.  Median $W_1(n)$
    across all thirty-four networks as a function of memory depth~$n$, for
    the two local tolerances (blue: $\varepsilon = 10^{-6}$; red:
    $\varepsilon = 10^{-4}$).  Shaded bands indicate the interquartile
    range across networks.  Displacements are computed over the twenty-one
    common memory depths available for all networks at both local tolerances
    ($n = 1$--$21$).  The monotonic growth with depth reflects increasing
    sensitivity of deep cascade contributions to the convergence threshold,
    consistent with the exponential suppression factor $e^{-\beta n}$ in the
    resolvent.  The consistently lower blue curve confirms that
    $\varepsilon_{\mathrm{ref}} = 10^{-5}$ is positioned at the accurate
    end of the locally stable precision range.
  }
  \label{fig:stability}
\end{figure*}

\paragraph{Network-level results.}
Figure~\ref{fig:stability}a ranks the thirty-four networks by their local
instability score~$\delta_i$.  The scores span $0.077$ (somatic
\textit{C.~elegans} connectome) to $0.258$ (\textit{E.~coli}
transcriptional network), with a median of $0.128$ across the corpus.
Seven networks are locally stable ($\delta_i < 0.10$), twenty-five are
moderately stable ($0.10 \leq \delta_i < 0.20$), and two marginally
exceed the sensitive threshold ($\delta_i \geq 0.20$): the
\textit{E.~coli} transcriptional network ($\delta_i = 0.258$) and the
DBLP citation network ($\delta_i = 0.204$).  Both exceedances are
marginal: they reflect the comparatively large and narrow HMS spectra of
these two networks, which makes their depth-averaged Wasserstein distance
more sensitive to the convergence threshold than networks with broader
spectral distributions.  In neither case does the HMS shift alter the
qualitative organisation of memory across depth scales.

\paragraph{Depth-resolved results and direction asymmetry.}
Figure~\ref{fig:stability}b shows the median Wasserstein displacement
$W_1(n)$ as a function of memory depth across all thirty-four networks.
At both local tolerances, the displacement grows monotonically with~$n$:
at depth $n=1$, median $W_1 = 0.027$ ($\varepsilon = 10^{-6}$) and
$0.035$ ($\varepsilon = 10^{-4}$); at depth $n=21$, median $W_1 = 0.116$
and $0.204$ respectively.  This monotonic growth is physically expected
from the resolvent structure: deep cascade contributions carry
an exponential suppression factor $e^{-\beta n}$ and therefore require
finer threshold resolution to certify precisely.

The displacement at $\varepsilon = 10^{-4}$ (red curve in
Fig.~\ref{fig:stability}b) exceeds that at $\varepsilon = 10^{-6}$ (blue
curve) by a factor of $1.65$ in median.  The cascade is therefore more
stable in the finer direction than in the coarser direction, establishing
that $\varepsilon_{\mathrm{ref}} = 10^{-5}$ sits at the accurate end of
the locally stable precision range.

\paragraph{Summary.}
Thirty-two of the thirty-four corpus networks fall within the moderately
stable or locally stable range ($\delta_i < 0.20$).  The two marginal
exceedances ($E.~coli$ and cit-DBLP) reflect spectral concentration rather
than numerical instability and do not affect the qualitative conclusions
drawn for those networks.  The reference tolerance $\varepsilon = 10^{-5}$
is positioned at the finest precision at which the full cascade is
certifiable across the corpus, and the local stability analysis confirms
that the cascade observables used throughout the main analysis are robust
to one-decade perturbations of this parameter.

\section{Robustness of the unsupervised classification to HMS depth}
\label{sec:depth_robustness}

\paragraph{Setup.}
The unsupervised classification operates on the hierarchical memory
spectrum (HMS) truncated at a maximum memory depth~$n$.  We assessed
whether the classification is stable with respect to this truncation by
repeating the full classification pipeline (feature extraction, GMM
fitting, BIC-based model selection, and post-hoc label assignment) at
five reduced depths $n \in \{5, 8, 11, 14, 17\}$ and comparing the
resulting labels and posterior probabilities against the reference
classification obtained at the full cascade depth $n = 20$.  The
reference depth was chosen as the deepest cascade level available across
all thirty-four networks; shorter depths truncate the HMS at earlier
compression stages.  At each depth, all thirty-four networks are
classified independently and the resulting regime and driver labels are
aligned to the reference using the Hungarian matching algorithm before
computing agreement statistics.

\begin{figure*}[t]
  \centering
  \includegraphics[width=\textwidth]{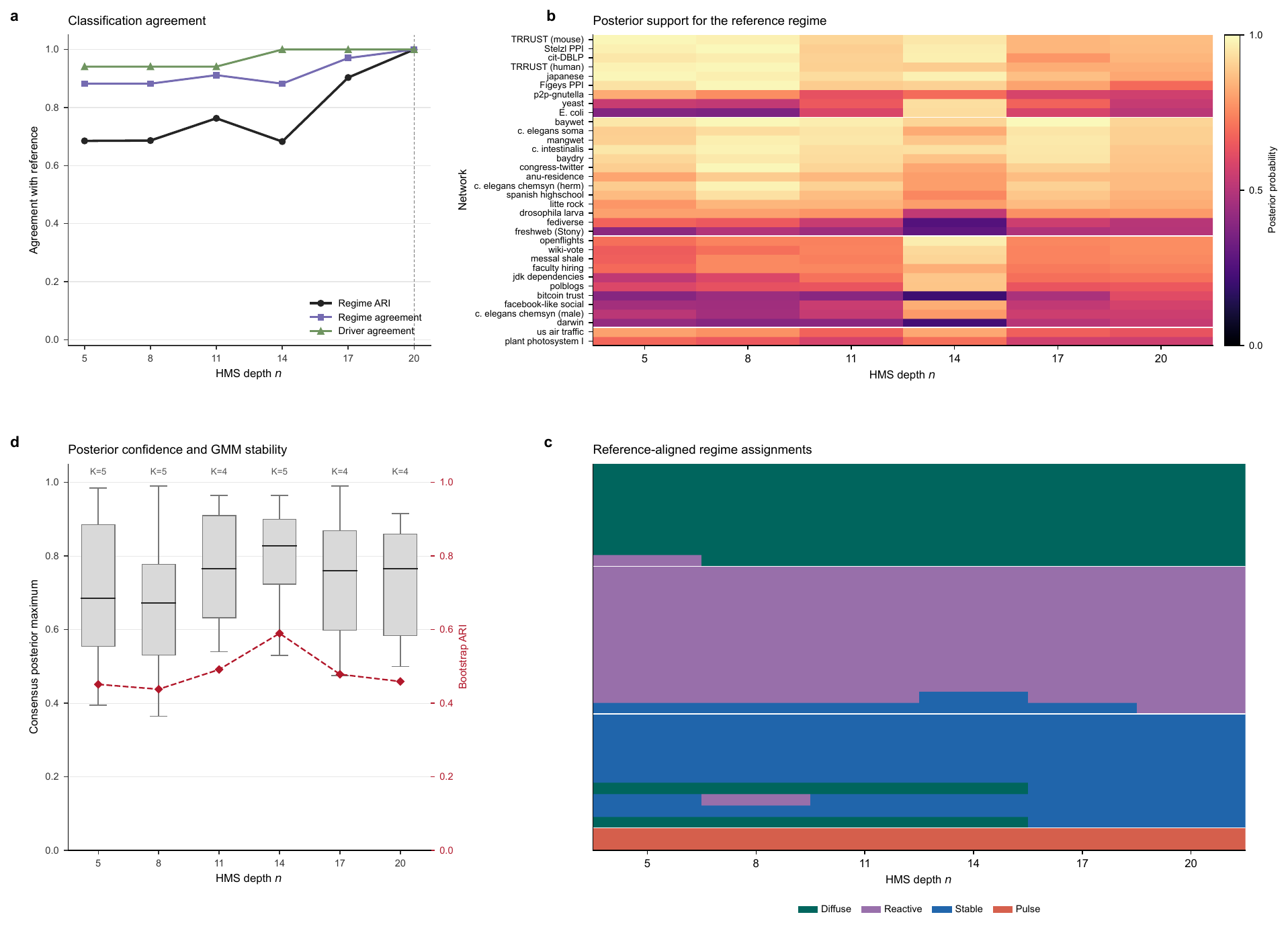}
  \caption{\textbf{Robustness of the unsupervised classification to the hierarchical memory depth.}
\textbf{a,} Agreement between classifications obtained at HMS depths $n\in\{5,8,11,14,17,20\}$ 
and the reference classification at $n=20$. Black circles show the 
adjusted Rand index (ARI) between regime partitions, purple squares 
show the fraction of networks retaining their reference regime, and 
green triangles show driver-label agreement. 
\textbf{b,} Posterior probability assigned to each network's reference 
regime across depth. Networks are ordered by their regime at $n=20$; 
lighter colours indicate stronger posterior support.
\textbf{c,} Reference-aligned regime assignments across depth. 
Depth-specific mixture components were aligned to the reference 
regimes by maximum posterior overlap, permitting multiple components 
to map to the same regime when the Bayesian information criterion 
selected a larger number of components. Colours denote Diffuse, Reactive, 
Stable and Pulse memory organisations.
\textbf{d,} Distribution of the maximum bootstrap-consensus regime 
posterior across networks at each depth. Boxes show the interquartile 
range, central lines indicate medians and whiskers show the non-outlier 
range. Red diamonds denote the bootstrap ARI of the fitted Gaussian 
mixture model, and labels above the boxes give the number of mixture 
components selected by the Bayesian information criterion. Regime 
agreement remained between $88\%$ and $97\%$, while driver agreement 
remained between $94\%$ and $100\%$ across the tested depths, indicating 
that the inferred functional-memory organisation is robust to the 
selected HMS depth.}
  \label{fig:depth_robustness}
\end{figure*}

\paragraph{Driver axis.}
The ternary driver coordinates $(\phi^{\mathrm{DCM}}, \phi^{\mathrm{GFM}},
\phi^{\mathrm{PNM}})$ and their derived labels converge rapidly with
cascade depth (Fig.~\ref{fig:depth_robustness}a,b).  Driver agreement
with the reference reaches $100\%$ at $n = 14$ and holds at all deeper
depths.  At $n = 5$, $94.1\%$ of networks receive the correct driver
label; the two disagreements are within-Geo sub-rank swaps
(\textit{E.~coli} Geo-M\,$\to$\,Geo-H and one other network Geo-H\,$\to$\,Geo-M),
with no network crossing a category boundary (Geo\,$\to$\,Meso, Meso\,$\to$\,Dir,
or similar).  The median ternary drift (the $L_1$ distance between the
normalised Wasserstein fractions at depth~$n$ and the reference)
decreases monotonically from $0.054$ at $n = 5$ to $0.005$ at $n = 17$,
confirming that the driver coordinates converge well before regime
classification reaches its reference state.

\paragraph{Regime axis: high-confidence networks.}
The twenty-two networks whose reference classification carries a consensus
posterior maximum above $0.65$ agree with the reference regime label at
every tested depth without exception (Fig.~\ref{fig:depth_robustness}b).
This complete agreement across all five reduced depths (including the
most severely truncated cascade at $n = 5$) demonstrates that the
core of the classification is robust to substantial reduction in cascade
information.  The identity of these networks spans all four dynamical
organisations and all six domains in the corpus.

\paragraph{Regime axis: borderline networks.}
The twelve networks with reference posterior maximum below $0.65$ show
varying degrees of instability.  Six networks disagree with the reference
at one or more depths; all six have reference posterior maxima between
$0.500$ and $0.615$.  The three most persistent disagreers are freshweb
(Stony, pmax\,$=\,0.500$, disagrees at all five reduced depths), darwin
(pmax\,$=\,0.540$, disagrees at four depths), and bitcoin trust
(pmax\,$=\,0.610$, disagrees at four depths).

The freshweb~(Stony) case warrants specific comment.  Its reference
classification at $n = 20$ is Reactive with a posterior maximum of
exactly $0.500$ (the minimum possible value before the label flips)
and the reference posterior for the alternative class (Stable) at
each shorter depth is between $0.285$ and $0.495$.  This network lies at
the decision boundary by definition; its apparent instability is therefore
a reflection of its intrinsic ambiguity rather than a sensitivity of the
classification procedure.

Darwin and bitcoin trust are classified as Diffuse at depths $n = 5$--$14$
but converge to Stable at depth~$17$ and the reference depth.  Their
deep-memory signatures differentiate from the Diffuse cluster only when
sufficient cascade depth is included in the feature representation,
indicating that their Stable membership is a property of their deep
rather than shallow cascade structure.

\paragraph{GMM diagnostics.}
The BIC-optimal number of components is $K = 4$ at depths $11$, $17$,
and $20$, matching the reference, and $K = 5$ at depths $5$, $8$, and $14$
(Fig.~\ref{fig:depth_robustness}d).  At $K = 5$, the additional component
absorbs borderline networks whose deep-memory signatures have not yet
differentiated at the truncated cascade depth; as more cascade information
is included, BIC penalises the extra parameter and collapses back to
$K = 4$.  The bootstrap ARI (measuring internal GMM stability across
resamplings at each depth) is comparable across all depths (range
$0.438$--$0.590$, reference depth $0.459 \pm 0.174$), confirming that
the GMM partition is similarly well-defined at all tested depths and that
the improved agreement with the reference at deeper cascade levels reflects
convergence of the underlying feature representations, not an increase in
cluster separability.

\paragraph{Summary.}
The classification is fully robust for the $22$ of $34$ networks
($65\%$) classified with high confidence at the reference depth:
these networks receive identical regime and driver labels at all tested
depths from $n = 5$ onwards.  Instability is confined to the $12$
lower-confidence networks, all with pmax\,$\leq\,0.61$, and is most
pronounced in the three genuinely borderline cases (pmax\,$\leq\,0.54$)
whose classification at the reference depth is itself uncertain.
The driver axis is robust at all depths, reaching full agreement by $n
= 14$.  The reference depth $n = 20$ was selected as the finest cascade
level available across all corpus networks; the depth robustness analysis
confirms that the qualitative conclusions of the main analysis are not
sensitive to this choice.

\section{Binary comparison for the Chesapeake Bay wet-season food web}
\label{sec:baywet_binary}

\paragraph{Motivation.}
The Chesapeake Bay wet-season food web~\cite{Baird1989} is the one network
in the corpus for which the GFM deviation is negative
($\Delta_{\mathrm{GFM}} = -0.004$, SNR $= 0.44$).  The main text
attributes this to the measurement resolution of the model-estimated
carbon flows: the edge weights are seasonal mass-balance estimates rather
than directly measured interaction strengths, and their precision may be
insufficient to encode the functional memory structure that the trophic
topology itself carries.  To test this attribution directly, we repeated
the full analysis pipeline on a binarised version of the same network
(denoted \textit{baywet-binary}), in which all edge weights are replaced
by unit values.  This isolates the contribution of the adjacency structure
from that of the carbon flow magnitudes, and asks whether the topology
alone satisfies the GFM law.

\paragraph{GFM deviation: sign reversal and mechanism.}
The binarised network produces a strongly positive GFM deviation
($\Delta_{\mathrm{GFM}} = +0.022$, SNR $= 11.9$,
CI $= [+0.022, +0.023]$), compared to the near-zero negative value in the
weighted network ($\Delta_{\mathrm{GFM}} = -0.004$, SNR $= 0.44$,
CI $= [-0.005, -0.002]$).  The reversal is driven by a large shift in the
GFM null ensemble, not in the real network: the real NMR changes by only
$-0.002$ upon binarisation ($0.097 \to 0.096$), while the GFM null mean
falls by $0.028$ ($0.101 \to 0.073$).  The mechanism is the following.
Model-estimated carbon flows are heterogeneous but not coherently organised
along deep directed paths: their variability is sufficiently large that
random weight permutations can occasionally produce higher walk-product
amplification than the original assignment, inflating the GFM null above
the real NMR.  Unit weights carry no such incidental heterogeneity; every
weight permutation of a binary network is equivalent, so the GFM null
measures the pure topological contribution of the adjacency structure.
The topology of the Chesapeake Bay food web does encode coherent geometric
memory structure: the binary GFM deviation of $+0.022$ is comparable in
magnitude to the five other food webs in the corpus, whose GFM deviations
range from $+0.003$ (baydry) to $+0.039$ (freshweb Stony).

\paragraph{DCM deviation.}
The DCM deviation is positive in both versions but changes substantially
upon binarisation ($\Delta_{\mathrm{DCM}}$: $+0.003$, SNR $0.25$ $\to$
$+0.012$, SNR $9.2$).  In the weighted network the mesoscale contribution
is marginally significant (CI $= [+0.001, +0.006]$); in the binary
network it is strongly significant (CI $= [+0.011, +0.012]$).  This
confirms that higher-order trophic wiring organisation contributes
positively to memory depth in the Chesapeake Bay food web independent
of the flow magnitudes, as it does in all six food webs.

\paragraph{Structural driver.}
The driver label changes from Meso (weighted) to Geo-M (binary).  In the
weighted network, the Wasserstein fractions are
$(\phi^{\mathrm{DCM}}, \phi^{\mathrm{GFM}}, \phi^{\mathrm{PNM}}) =
(0.67, 0.21, 0.12)$: DCM dominates because the noisy carbon flows
generate little distinctive GFM signal relative to the mesoscale wiring
structure.  In the binary network the fractions invert to $(0.18, 0.64,
0.19)$: GFM dominates, placing the Chesapeake Bay food web in the
geometry-dominated sector alongside the other freshwater and
palaeontological food webs in the corpus (litte rock, messal shale,
freshweb Stony).  The driver assignment for the weighted baywet therefore
reflects the information content of the flow estimates rather than the
functional organisation of the trophic network.

\paragraph{Polarity response.}
The PNM slope reverses sign upon binarisation: $s = +0.004$ (weighted,
near-zero) $\to$ $s = -0.018$ (binary, clearly negative).  The positive
slope of the weighted network made it the sole outlier to the food-web
negative-slope pattern reported in the main text; the binary network
restores consistency with that pattern.  Directionality sustains deep
memory in the Chesapeake Bay food web topology, as in every other food
web in the corpus.

\paragraph{Dynamical organisation.}
The regime label is Reactive in both versions (pmax $0.92$ weighted,
$0.95$ binary).  The deep-memory dynamical class is topologically
determined and insensitive to whether model-estimated or unit weights are
used, consistent with the observation that NMR itself changes by less than
$2\%$ upon binarisation.

\paragraph{Summary.}
Table~\ref{tab:baywet_binary} reports the full comparison.  Every
anomalous feature of the weighted Chesapeake Bay wet-season food web
--- the negative GFM deviation, the Meso driver, the near-zero positive
polarity slope --- disappears when the model-estimated carbon flows are
replaced by unit weights.  The topology of the food web satisfies the GFM
law strongly, carries a clear mesoscale signal, and exhibits the negative
polarity slope characteristic of trophic systems.  The weighted network's
anomalies are therefore artefacts of the resolution of the flow estimates,
not properties of the Chesapeake Bay ecosystem.  This confirms the
interpretive framework of the GFM comparison as a plausibility test for
edge weight quality: a near-zero or negative deviation is a diagnostic
signal that the weight configuration may not faithfully encode the
functionally relevant interaction geometry.

\begin{table}[h]
\centering
\caption{Comparison of the Chesapeake Bay wet-season food web with and
without model-estimated carbon flow weights.  SNR $= |\Delta|/\sigma$
where $\sigma$ is the bootstrap standard deviation.\\
}
\label{tab:baywet_binary}
\small
\begin{tabular}{lrr}
\toprule
Observable & Weighted & Binary \\
\midrule
NMR                          & $0.097$  & $0.096$  \\
NMR$_{\mathrm{GFM}}$ (null) & $0.101$  & $0.073$  \\
$\Delta_{\mathrm{GFM}}$      & $-0.004$ & $+0.022$ \\
GFM SNR                      & $0.44$   & $11.9$   \\
GFM 95\% CI                  & $[-0.005,\,-0.002]$ & $[+0.022,\,+0.023]$ \\
\midrule
$\Delta_{\mathrm{DCM}}$      & $+0.003$ & $+0.012$ \\
DCM SNR                      & $0.25$   & $9.2$    \\
DCM 95\% CI                  & $[+0.001,\,+0.006]$ & $[+0.011,\,+0.012]$ \\
\midrule
PNM slope $s$                & $+0.004$ & $-0.018$ \\
$W_1^{\mathrm{DCM}}$         & $0.083$  & $0.020$  \\
$W_1^{\mathrm{GFM}}$         & $0.026$  & $0.072$  \\
$\phi^{\mathrm{DCM}}$        & $0.671$  & $0.177$  \\
$\phi^{\mathrm{GFM}}$        & $0.213$  & $0.637$  \\
\midrule
Driver                       & Meso     & Geo-M    \\
Regime                       & Reactive & Reactive \\
Regime $p_{\max}$            & $0.92$   & $0.95$   \\
\bottomrule
\end{tabular}
\end{table}

\section{Falsification: GFM comparison under random graph null models}
\label{sec:random_controls}
\paragraph{Design.}
The GFM comparison is a controlled intervention: it asks whether observed
interaction strengths organise functional memory more deeply than weight
fluctuations on the same binary topology. A positive deviation
$\Delta_{\mathrm{GFM}}>0$, with
$\Delta_{\mathrm{GFM}}={\rm NMR}_{\rm real}-{\rm NMR}_{\rm GFM}$,
is therefore informative only if the same pipeline does not produce a
systematic positive signal when edge weights have no coherent geometric
organisation by construction. To test this, we applied the full analysis
pipeline to two families of synthetic directed networks with independently
assigned positive edge weights. The synthetic weights were iid lognormal
and normalised to unit mean; the GFM ensemble then perturbed these weights
on the fixed binary support using the same continuous-weight GFM procedure
as in the empirical analysis. Thirty-four graphs were generated from each
family, all with $n=200$ nodes, using the same convergence tolerance
($\varepsilon=10^{-5}$) and $n_{\mathrm{null}}=100$ GFM resamples.

The two families are:
\begin{itemize}
\item \textbf{ER-iid}: sparse directed Erdős--Rényi controls generated with
Bernoulli parameter $p=0.04$ before removal of self-loops and collapse to
continuous weighted support (mean final density $0.0346$, mean $1376$
directed edges).
\item \textbf{WS-iid}: directed Watts--Strogatz controls with $k=6$ outgoing
ring neighbours and rewiring probability $p_{\mathrm{rew}}=0.4$, with no
parallel edges retained in the final weighted support (mean final density
$0.0299$, mean $1189$ directed edges).
\end{itemize}

\paragraph{Results.}
For ER-iid, $\Delta_{\mathrm{GFM}}<0$ in all thirty-four graphs, with
confidence intervals entirely below zero in every case
(mean $\Delta_{\mathrm{GFM}}=-0.0032$, range
$[-0.0060,-0.0013]$; violation rate $1.00$, 95\% CI
$[0.90,1.00]$; Fig.~\ref{fig:random_controls}). Thus, in sparse random
directed graphs with iid weights, the GFM ensemble systematically exhibits
larger non-Markovianity than the original synthetic graph
(mean ${\rm NMR}_{\rm GFM}/{\rm NMR}_{\rm real}=1.025$).

For WS-iid, the result was equally strong. Despite the small-world support,
$\Delta_{\mathrm{GFM}}<0$ in all thirty-four graphs, again with confidence
intervals entirely below zero in every case
(mean $\Delta_{\mathrm{GFM}}=-0.0028$, range
$[-0.0046,-0.0013]$; violation rate $1.00$, 95\% CI
$[0.90,1.00]$; Fig.~\ref{fig:random_controls}). The GFM null ensemble
again outperformed the iid-weighted network
(mean ${\rm NMR}_{\rm GFM}/{\rm NMR}_{\rm real}=1.019$).

\paragraph{Interpretation.}
Both random model families violate the empirical GFM law, despite having
different topological organisation. These negative controls show that a
positive $\Delta_{\mathrm{GFM}}$ is not a generic consequence of the EMD
pipeline, graph sparsity, positive edge weights, or the GFM resampling
procedure itself. They also show that small-world topology alone is
insufficient: without coherent placement of interaction strengths on the
directed support, the empirical memory-deepening effect disappears and is
reversed. The systematic positive $\Delta_{\mathrm{GFM}}$ observed in the
empirical corpus is therefore a signature of organised weight geometry,
rather than a generic property of weighted directed graphs.
\begin{figure*}[!h]
  \centering
  \includegraphics[width=\textwidth]{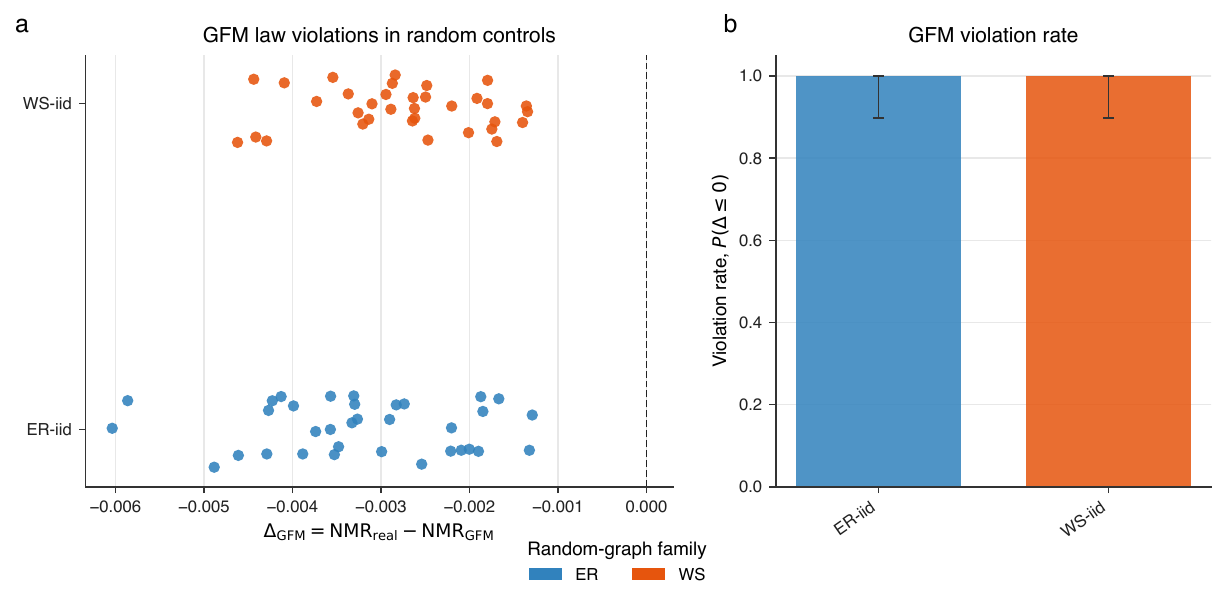}
 \caption{
\textbf{GFM deviations in synthetic random-graph controls with independently distributed weights.}
\textbf{a}, Geometry-fluctuation shift
$\Delta_{\mathrm{GFM}}=\NMR_{\mathrm{real}}-\NMR_{\mathrm{GFM}}$
for thirty-four Erdős--Rényi controls (ER-iid, blue) and thirty-four
Watts--Strogatz controls (WS-iid, orange), each with $n=200$ nodes.
Synthetic edge weights were drawn independently from a positive lognormal
distribution and normalised to unit mean; the GFM ensemble was then generated
on the same binary support using the same continuous-weight resampling
procedure as in the empirical analysis. Each point is the mean over
$n_{\mathrm{null}}=100$ GFM realisations. The dashed vertical line marks
$\Delta_{\mathrm{GFM}}=0$; values to the left indicate that the GFM ensemble
has larger non-Markovianity than the original iid-weighted graph. Both ER-iid
and WS-iid controls fall entirely to the left of zero.
\textbf{b}, Violation rate $P(\Delta_{\mathrm{GFM}}\leq 0)$ for each group.
Error bars are Wilson 95\% confidence intervals. The violation rate is
$1.00$ for both ER-iid and WS-iid controls (95\% CI $[0.90,1.00]$ in each
case), in contrast to the empirical corpus, where only $2$ of $34$ networks
violate the GFM law. Thus, random iid-weighted graphs do not reproduce the
systematic positive GFM shift observed in real functionally weighted networks.
}
  \label{fig:random_controls}
\end{figure*}

\section{Full functional memory classification}

\begin{table}[!ht]
\centering
\scriptsize
\setlength{\tabcolsep}{3.5pt}
\renewcommand{\arraystretch}{0.96}
\caption{\textbf{Structural characteristics and functional-memory 
classification of the 34 empirical networks.}
The driver label identifies the dominant structural source of 
deviation from the null ensembles, whereas the memory regime 
describes the organisation of hierarchical functional memory.}
\label{tab:network_driver_regime_classification}
\begin{tabular}{llccll} 
\toprule
\textbf{Network} &
\textbf{Domain} &
{\textbf{Nodes}} &
{\textbf{Edges}} &
\textbf{Driver} &
\textbf{Memory regime} \\
\midrule

\textit{E. coli} transcription~\cite{ShenOrr2002} &
Gene regulatory & 423 & 578 & Geo-M & Diffuse \\

Figeys PPI~\cite{Ewing2007} &
Protein interaction & 2239 & 6452 & Geo-M & Diffuse \\

Stelzl PPI~\cite{Stelzl2005} &
Protein interaction & 1706 & 6207 & Geo-H & Diffuse \\

Plant photosystem I~\cite{Montepietra2020} &
Biophysical & 192 & 29390 & Geo-L & Pulse \\

TRRUST (human)~\cite{Han2015} &
Gene regulatory & 2862 & 9396 & Geo-H & Diffuse \\

TRRUST (mouse)~\cite{Han2015} &
Gene regulatory & 2456 & 7057 & Geo-H & Diffuse \\

Yeast transcription~\cite{Milo2002} &
Gene regulatory &  916 & 1094 & Geo-H & Diffuse \\

\addlinespace[2pt]
\textit{C. elegans} soma~\cite{Moutuou2025b} &
Neural & 280 & 12071 & Meso & Reactive \\

\textit{C. elegans} chemical synapses (hermaphrodite)~\cite{Cook2019} &
Neural connectome & 443 & 4207 & Meso & Reactive \\

\textit{C. elegans} chemical synapses (male)~\cite{Cook2019} &
Neural  connectome& 905 & 5300 & Meso & Stable \\

\textit{C. intestinalis} &
Neural connectome & 205 & 2903 & Geo-H & Reactive \\

\textit{Drosophila} larva~\cite{Winding2023} &
Neural connectome & 2952 & 110677 & Meso & Reactive \\

\addlinespace[2pt]
DBLP citations~\cite{nr} &
Citation & 12591 & 49743 & Geo-M & Diffuse \\

\addlinespace[2pt]
Bay Dry food web~\cite{nr} &
Food web & 128 & 2137 & Meso & Reactive \\

Bay Wet food web~\cite{konect,Peixoto2020} &
Food web & 128 & 2106 & Meso & Reactive \\

Mangrove Wet food web~\cite{nr} &
Food web & 97 & 1492 & Meso & Reactive \\

Little Rock food web~\cite{konect,Peixoto2020} &
Food web & 183 & 2494 & Geo-L & Reactive \\

Messel Shale food web~\cite{Dunne2014,Peixoto2020} &
Food web & 700 & 6444 & Geo-M & Stable \\

Stony Stream food web~\cite{Thompson2003,Peixoto2020} &
Food web & 112 & 832 & Geo-L & Reactive \\

\addlinespace[2pt]
U.S. faculty hiring~\cite{Wapman2022,Peixoto2020} &
Transportation & 3284 & 61936 & Meso & Stable \\

\addlinespace[2pt]
Darwin word adjacency~\cite{Milo2004,Peixoto2020} &
Information & 7381 & 46281 & Geo-Meso & Stable \\

Japanese word adjacency~\cite{Milo2004,Peixoto2020} &
Information & 2704 & 8300 & Geo-L & Diffuse \\

Political blogs~\cite{Adamic2005,Peixoto2020} &
Information & 1224 & 19025 & Geo-M & Stable \\

\addlinespace[2pt]
Gnutella P2P~\cite{snapnets} &
Social & 10876 & 39994 & Geo-H & Diffuse \\

Wikipedia voting~\cite{snapnets} &
Social & 7115 & 103689 & Geo-H & Stable \\

\addlinespace[2pt]
Congress Twitter~\cite{snapnets} &
Social & 475 & 13289 & Geo-M & Reactive \\

Fediverse~\cite{Peixoto2020} &
Social & 4860 & 484164 & Geo-L & Reactive \\

Spanish high school~\cite{RuizGarcia2023} &
Social & 409 & 8557 & Geo-M & Reactive \\

Facebook-like social~\cite{Opsahl2009} &
Social & 1899 & 59835 & Meso & Stable \\

ANU residence hall~\cite{Freeman1998,Peixoto2020} &
Social & 217 & 2672 & Geo-L & Reactive \\

Bitcoin trust~\cite{Kumar2016,snapnets} &
Social & 5881 & 35592 & Dir & Stable \\

\addlinespace[2pt]
JDK dependencies~\cite{konect} &
Software & 6434 & 150985 & Dir & Stable \\

\addlinespace[2pt]
OpenFlights~\cite{Peixoto2020} &
Transportation & 3214 & 36907 & Geo-L & Stable \\

U.S. air traffic~\cite{Peixoto2020} &
Transportation & 2278 & 6390340 & Meso & Pulse \\

\bottomrule
\end{tabular}
\end{table}

\end{document}